\documentclass[reqno]{amsart}
\usepackage[a4paper,margin=2.5cm,includefoot]{geometry}
\usepackage{fancyhdr}

\usepackage[titletoc,title]{appendix}

\usepackage[bb=libus]{mathalpha}
\usepackage{geometry}
\usepackage{yfonts}
\usepackage[T1]{fontenc} 
\usepackage[utf8]{inputenc} 
\usepackage{mathtools}
\usepackage{amssymb}
\usepackage{amsthm}
\usepackage{bbm}
\usepackage{mathrsfs}
\usepackage[english]{babel} 
\usepackage{enumitem}
\usepackage{dsfont}
\usepackage{url}
\usepackage{graphicx}
\usepackage{braket}
\usepackage{nicematrix}
\usepackage{cancel}

\usepackage{hyperref}
\hypersetup{linktocpage=true, colorlinks=true, breaklinks=true, linkcolor=blue, menucolor=black, urlcolor=black,citecolor=blue,filecolor=green}

\newcommand{\N}{\mathbb{N}} 
\newcommand{\R}{\mathbb{R}} 
\newcommand{\C}{\mathbb{C}} 

\newcommand{\1}{\mathds{1}}

\theoremstyle{plain}
\newtheorem{thrm}{Theorem}
\newtheorem{prop}{Proposition}
\newtheorem{cor}[prop]{Corollary}
\newtheorem{lemma}[prop]{Lemma}

\theoremstyle{definition}

\newtheorem{assumption}{Assumption}

\begin{document}
	
	\author{Steffen Polzer}
	\address{Steffen Polzer \hfill\newline
		\indent Section de mathématiques, Université de Genève
	}
	\email{steffen.polzer@unige.ch}

	\title{Lower bound on the energy-momentum relation of the polaron}
	
	\begin{abstract}
		For a class of polaron-type models, we establish a lower bound on the energy-momentum relation in terms of the vacuum overlap and the spectral gap of the total momentum zero Hamiltonian.
		We show convergence of the rescaled mean square displacement of the associated path measure to the inverse of the effective mass. We derive a probabilistic criterion for the absence of ground states at large total momentum.
	\end{abstract}

	\maketitle
	
	\section{Introduction and main results}
	The polaron describes the interaction of a charged particle with a bosonic field. For suitable functions $\omega$ and $v$ on $\R^d$, the Hamiltonian $H(P)$ at fixed total momentum $P\in \R^d$ between particle and field is the self-adjoint operator on bosonic Fock space 
  	$\mathcal F(L^2(\R^d))$ 
	over $L^2(\R^d)$ formally given by
	\begin{equation}
		\label{Equation: Hamiltonianof Polaron at fixed total momentum}
		H(P) = \frac12 |P - P_f|^2 + \mathrm d\Gamma(\omega) + \phi(v)
	\end{equation}
	where $P_{\mathrm f} = \mathrm d\Gamma(\operatorname{id}_{\R^d})$ and $\mathrm d\Gamma(\omega)$ are field momentum and energy respectively (with $\operatorname{id}_{\R^d}$ and $\omega$ acting as multiplication operators) and $\phi(v) = a^*(v) + a(v)$ is the Segal field operator, with $a^*(v)$ and $a(v)$ being the creation and annihilation operators associated to $v$.
	We will study the energy-momentum relation
	\begin{equation*}
		E(P) \coloneqq \inf \sigma(H(P))
	\end{equation*}
	i.e. the ground state energy as a function of $P$.
	For the Fröhlich model of the polaron, corresponding to  $d = 3$, $\omega(k) = 1$ and $v(k) = \sqrt{\alpha/(2\pi^2)} |k|^{-1}$ it was shown in \cite{LampartMitrouskasMysliwy.2023} that $E$ has a strict global minimum in the origin. In \cite{Polzer.2023}, we improved this by showing that $E$ is an increasing and concave function of $|P|^2$, both so strictly on $\{|P|^2:\, P\in \mathcal I_0\}$ where
	\begin{equation*}
		\mathcal I_0 \coloneqq \big\{P \in \R^d: \, E(P) <E_{\text{ess}}(P) \big \}
	\end{equation*}
	with $E_{\text{ess}}(P)$ being the bottom of the essential spectrum of $H(P)$. This has recently been generalized by Desio and Seiringer to other polaron-type models \cite{Desio.Seiringer.2025}.
	In the following, we establish a strictly positive lower bound on $E(P) - E(P_0)$ for $P_0 \in \mathcal I_0$ and $P\in \R^d$ with $|P| > |P_0|$ in terms of the vacuum overlap and the spectral gap of $H(P_0)$.
	
	To state our precise assumptions on $v$ and $\omega$, 
	we recall that a function $f \in C^\infty(0, \infty)$ is called completely monotone if $(-1)^n f^{(n)}(t) \geq 0$ for all $t> 0$ and $n\in \N$. By Bernstein's theorem, a function $f$ is completely monotone if and only if it is the Laplace transform of a measure on $[0, \infty)$ \cite{Donoghue.1974}. A non-negative function $f \in C^\infty(0, \infty)$ is called a Bernstein function if $f'$ is completely monotone, in which case $e^{-tf}$ is completely monotone for all $t\geq 0$. We will make the following standing assumption.
	
	\begin{assumption}
		\label{Assumption: Assumption 1}
		The functions $\omega$ and $v$ are radially symmetric. The function $|k|^2 \mapsto |v(k)|^2$ is completely monotone with $v \not \equiv 0$ and $\lim_{k\to \infty} v(k) = 0$. The function $|k|^2 \mapsto  \omega(k)$ is strictly positive on $(0, \infty)$ and a Bernstein function. Moreover, we have
		\begin{equation}
		\label{Equation: integrability Assumption}
				\int_{\R^d} \frac{|v(k)|^2}{1+|k|^2} \frac{1+\omega(k)}{\omega(k)} \, \mathrm dk < \infty.
		\end{equation}
		
	\end{assumption}

	As a consequence of \eqref{Equation: integrability Assumption}, $\phi(v)$ is infinitesimally form-bounded with respect to $\frac12 |P - P_f|^2 + \mathrm d\Gamma(\omega)$ and \eqref{Equation: Hamiltonianof Polaron at fixed total momentum} defines a self-adjoint operator via its quadratic form \cite{Desio.Seiringer.2025}.
	Assumption \ref{Assumption: Assumption 1} is in particular satisfied for the Fröhlich polaron, i.e.\ for $d = 3$, $\omega(k) = 1$ and $v(k) = \sqrt{\alpha/(2\pi^2)} |k|^{-1}$ for some $\alpha>0$ and for the Nelson model with a Gaussian ultraviolet regularization,  for which $\omega(k) = \sqrt{|k|^2 + m^2}$ and $v(k) = \alpha  e^{-\varepsilon |k|^2} \omega(k)^{-1/2}$ for some $m\geq 0, \, \varepsilon>0, \, \alpha>0$.

	For $P\in \R^d$, we denote by $\rho(P)$ and $\Delta(P)$ the vacuum overlap and spectral gap of $H(P)$ defined by
	\begin{align*}
		\rho(P) \coloneqq \big\langle \Omega, \1_{ \{E(P)\}}(H(P)) \Omega \big\rangle, \quad \Delta(P) \coloneqq \inf \big\{ \lambda >0:\, E(P) + \lambda \in \sigma(H(P)) \big\}
	\end{align*}
	where $\1_{ \{E(P)\}}(H(P))$ is the orthogonal projection onto $\operatorname{ker}(H(P) - E(P))$ and where $\Omega$ is the Fock vacuum.  By a standard application of Perron--Frobenius theory, if there exists a ground state of $H(P)$, i.e.\ an eigenvector to the eigenvalue $E(P)$, then it is unique up to a phase and non-orthogonal to $\Omega$, see  Appendix \ref{Appendix: positivity improving} for details. Hence, $H(P)$ has a ground state if any only if $\rho(P)>0$.
	In particular, we have $\rho(P) >0$ for $P\in \mathcal I_0$.
	Also notice that $\Delta(P) > 0$ holds for $P\in \mathcal I_0$ by definition of the essential spectrum.
	\begin{thrm}
		\label{Theorem: Lower bound on EMR}
		We have for all $P, P_0\in \R^d$ with $|P| \geq |P_0|$
		\begin{equation*}
			E(P) \geq E(P_0) + \tfrac{1}{2}\big(\Delta(P_0) + \delta(P, P_0)\big) - \sqrt{\tfrac{1}{4}\big(\Delta(P_0) + \delta(P, P_0)\big)^2 - \Delta(P_0) \rho(P_0) \delta(P, P_0)}
		\end{equation*}
		where  $\delta(P, P_0) \coloneqq \frac{1}{2}\big(|P|^2 - |P_0|^2\big)$.
	\end{thrm}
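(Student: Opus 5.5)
The form of the bound suggests a two-level comparison. The right-hand side equals $E(P_0)$ plus the smaller root of
\begin{equation*}
\lambda^2-(\Delta+\delta)\lambda+\Delta\rho\delta=0,
\end{equation*}
where $\Delta=\Delta(P_0)$, $\rho=\rho(P_0)$ and $\delta=\delta(P,P_0)$. This root is the lowest eigenvalue of the $2\times 2$ matrix
\begin{equation*}
M=\Delta\,\mathrm{diag}(0,1)+\delta\,|w\rangle\langle w|,\qquad w=(\sqrt{\rho},\sqrt{1-\rho}).
\end{equation*}
Indeed, $\operatorname{tr}M=\Delta+\delta$ and $\det M=\Delta\delta\rho$.

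In Hilbert-space terms, let $\Pi$ be the projection onto the ground state $\psi_0$ of $H(P_0)$, chosen with $\langle\Omega,\psi_0\rangle=\sqrt{\rho}>0$ by Appendix \ref{Appendix: positivity improving}. Then $M$ is the compression to $\operatorname{span}\{\psi_0,\Omega\}$ of
\begin{equation*}
A:=\Delta(1-\Pi)+\delta|\Omega\rangle\langle\Omega|.
\end{equation*}
The plan is therefore:
\begin{enumerate}
\item[(i)] show $E(P)-E(P_0)\ge \inf\sigma\big(H(P_0)-E(P_0)+\delta|\Omega\rangle\langle\Omega|\big)$;
\item[(ii)] use $H(P_0)-E(P_0)\ge\Delta(1-\Pi)$ to bound the right-hand side of (i) below by $\inf\sigma(A)$;
\item[(iii)] compute $\inf\sigma(A)$.
\end{enumerate}

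Step (iii) is elementary. The operator $A$ leaves $\operatorname{span}\{\psi_0,\Omega\}$ invariant and equals $\Delta$ on its orthogonal complement. On the span it is represented by $M$ in the orthonormal basis $\psi_0$, $(\Omega-\sqrt\rho\,\psi_0)/\sqrt{1-\rho}$. Since the smaller eigenvalue of $M$ is at most $M_{22}$, and $M_{22}\le\Delta$ when $\delta\ge0$ (which holds because $|P|\ge|P_0|$), we get $\inf\sigma(A)=\lambda_-(M)$. This is exactly the claimed bound. The degenerate case $\rho=1$, where the span is one-dimensional, is handled the same way.

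If $\rho(P_0)=0$ or $\Delta(P_0)=0$, the bound reduces to $E(P)\ge E(P_0)$. This follows from the monotonicity of $E$ in $|P|^2$ \cite{Polzer.2023,Desio.Seiringer.2025}, or from the same argument below.

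Step (i) is the heart of the proof and the main obstacle. Since $e^{-tH(P)}$ is positivity improving (Appendix \ref{Appendix: positivity improving}), we have
\begin{equation*}
E(P)=-\lim_{T\to\infty}\tfrac1T\log\langle\Omega,e^{-TH(P)}\Omega\rangle,
\end{equation*}
and likewise for $K:=H(P_0)+\delta|\Omega\rangle\langle\Omega|$, with $\lim\tfrac1T\log\langle\Omega,e^{-TK}\Omega\rangle\ge-\inf\sigma(K)$. It therefore suffices to prove, for all $T>0$,
\begin{equation*}
\langle\Omega,e^{-TH(P)}\Omega\rangle\le\langle\Omega,e^{-TK}\Omega\rangle .
\end{equation*}

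To establish this, I would use the Feynman--Kac representation of $\langle\Omega,e^{-TH(P)}\Omega\rangle$ from \cite{Polzer.2023}, integrating out the field. The complete monotonicity in Assumption \ref{Assumption: Assumption 1} writes the pair interaction as a mixture of Gaussian kernels and $\omega$ via subordination. Expanding the exponential of the pair potential then yields a representation by collections of interaction arcs (a point-process / Dyson expansion). For a fixed arc configuration, the particle path is a Gaussian process, and the $P$-dependence enters only through a factor $e^{-|P|^2\sigma_T/2}$ with a conditional variance $\sigma_T\in[0,T]$.

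The key point to verify is that on the set $U\subset[0,T]$ of times not covered by any arc, the increments are those of free Brownian motion. Hence $\sigma_T\ge |U|$, and more precisely
\begin{equation*}
e^{-|P|^2\sigma_T/2}\le e^{-|P_0|^2\sigma_T/2}\,e^{-\delta|U|}.
\end{equation*}
Uncovered times are exactly the times at which the field is in the vacuum. Re-summing the expansion with the extra weight $e^{-\delta|U|}$ should therefore reproduce the Duhamel expansion of $e^{-TK}$ around $H(P_0)$, with $\delta|\Omega\rangle\langle\Omega|$ inserted at vacuum times.

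If an exact identification is too delicate, I would fall back to proving only the inequality directly at the level of the positive expansion terms, which is all that is needed. In either case, the delicate part is checking rigorously that the Gaussian conditional variance dominates the Lebesgue measure of the uncovered set.
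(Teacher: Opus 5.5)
Your outline is correct, and step (i) is exactly the paper's mechanism. The paper's $A(P)$ is your $K$. The ratio $\langle\Omega,e^{-TH(P)}\Omega\rangle/\langle\Omega,e^{-TH(P_0)}\Omega\rangle$ is written as $\mathbb E_{\widehat\Theta_{P_0,T}}[e^{-\delta\sigma_T^2}]$ (Corollary~\ref{Corollary: Point process representation of fraction of partition functions}), and $\sigma_T^2\ge\hat D_T=|U|$ gives the comparison.

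Your emphasis is inverted, though. The bound $\sigma_T^2\ge\hat D_T$ is the easy part, see \eqref{Equation: Trivial estiamtes on sigma}: the Gaussian tilt $\exp(-\tfrac12\sum_i u_i^2|X_{\xi_i}|^2)$ involves only increments over $[0,T]\setminus U$, so increments over $U$ remain independent free Brownian increments. The substantive step is the exact resummation you only assert (``should reproduce''), namely $\mathbb E_{\widehat\Xi_{P_0,T}}[e^{-\delta\hat D_T}]=\langle\Omega,e^{-TK}\Omega\rangle/Z_T$ with $Z_t:=\langle\Omega,e^{-tH(P_0)}\Omega\rangle$.

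The paper obtains this from Proposition~\ref{Proposition: Renewal transform of spectral measures}, whose core is the identity
\[
Z_T\,\widehat\Xi_{P_0,T}\big(\hat Y_{t_1}=\dots=\hat Y_{t_m}=0\big)=\prod_{i=0}^{m}Z_{t_{i+1}-t_i}\qquad (t_0=0,\ t_{m+1}=T).
\]
This identity comes from the factorization of $F_T$ at uncovered times together with Poisson independence. The paper combines it with Proposition~\ref{Proposition: Renewal transform under rank-one perturbations}, proved via the convolution equation in $\alpha$ and Lemma~\ref{Lemma: convolution eq has at most one sol}.

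With that identity you can finish your version more directly. The Duhamel series $\langle\Omega,e^{-TK}\Omega\rangle=\sum_{m\ge0}(-\delta)^m\int_{0<t_1<\dots<t_m<T}\prod_{i=0}^m Z_{t_{i+1}-t_i}\,\mathrm dt$ converges absolutely. Inserting the identity and resumming gives $Z_T\,\mathbb E_{\widehat\Xi_{P_0,T}}[e^{-\delta\hat D_T}]$. Your fallback of a termwise comparison is not needed, and would not work as stated because the $\delta$-series alternates.

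Also, under Assumption~\ref{Assumption: Assumption 1}, where $v$ need not be square integrable, you must start from the Dyson expansion \eqref{Equation: Dyson-type expansion of semigroup}. A Feynman--Kac formula in this generality is Theorem~\ref{Theorem: Feyman--Kac formula}, which the paper derives from that expansion; it is not an available input.

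Steps (ii)--(iii) are a genuinely different route. The paper uses the Aronszajn--Krein formula $\tilde s=s/(1+\delta s)$. Below $E(P_0)+\Delta(P_0)$, the bottom of $\mu_P$ must be an atom solving $\delta s(\lambda)=-1$. The paper then bounds $s$ above by the Stieltjes transform of the two-atom measure with mass $\rho$ at $E(P_0)$ and mass $1-\rho$ at $E(P_0)+\Delta$.

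Your form bound $H(P_0)-E(P_0)\ge\Delta(1-\Pi)$ plus the $2\times2$ compression is equivalent. That two-atom measure is exactly the spectral measure of your unperturbed two-level model $\Delta\,\mathrm{diag}(0,1)$ with respect to $w$, shifted by $E(P_0)$. Because your route is variational, it needs neither invariance of the essential spectrum nor monotonicity of $s$ in the gap. It bounds the smaller quantity $\inf\sigma(K)$, which suffices because $\langle\Omega,e^{-TK}\Omega\rangle\le e^{-T\inf\sigma(K)}$ trivially. That is the only direction you need: the inequality you wrote for $K$ points the other way, and no Perron--Frobenius argument for $K$ is required.

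One slip to fix: $M_{22}=\Delta+\delta(1-\rho)\ge\Delta$, so your justification of $\lambda_-(M)\le\Delta$ is wrong. Instead, note that $p(\lambda)=\lambda^2-(\Delta+\delta)\lambda+\Delta\rho\delta$ satisfies $p(\Delta)=-\Delta\delta(1-\rho)\le0$. Alternatively, test $M$ on the unit vector $x=(\sqrt{1-\rho},-\sqrt\rho)\perp w$, which gives $\langle x,Mx\rangle=\Delta\rho\le\Delta$. The paper uses the same fact implicitly when it says there is nothing to show if $f(P)\ge E(P_0)+\Delta(P_0)$.
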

	
	For the Fröhlich polaron, $E_{\text{ess}}(P) = E(0) + 1$ for all $P$ \cite{Sp88} and it has been conjectured that $\mathcal I_0$ is bounded, i.e.\ that $E(P) = E(0) + 1$ for sufficiently large $P$ \cite{Feynman.1972}. This, however, has so far only been shown at small coupling \cite{Da17}. While the lower bounds on $E(P)$ given in Theorem \ref{Theorem: Lower bound on EMR} might be of interest in this context, we point out that
		\begin{align*}
			\lim_{P \to \infty} \tfrac{1}{2}\big(\Delta(P_0) + \delta(P, P_0)\big) - \sqrt{\tfrac{1}{4}\big(\Delta(P_0) + \delta(P, P_0)\big)^2 - \Delta(P) \rho(P_0) \delta(P, P_0)} &= \Delta(P_0) \rho(P_0)
		\end{align*}
	is, for $P\in \mathcal I_0$, strictly smaller than $E_{\operatorname{ess}}(P_0) - E(P_0)$. We also notice that for the Fröhlich polaron $\Delta(0) =  E_{\operatorname{ess}}(0) - E(0)$ at sufficiently weak coupling \cite{Seiringer.2023} whereas $\Delta(P) <  E_{\operatorname{ess}}(P) - E(P)$ for fixed $P$ at sufficiently large coupling \cite{MitrouskasSeiringer.23}.
	
	For the proof of Theorem \ref{Theorem: Lower bound on EMR}, we will use the representation
	\begin{equation}
		\label{Equation: E from semigroup}
		E(P) = -\lim_{T \to \infty} \frac{1}{T} \log \big\langle \Omega, e^{-T H(P)} \Omega \big\rangle
	\end{equation}
	of the ground state energy which follows from Perron--Frobenius theory, see Appendix \ref{Appendix: positivity improving} below. For the Fröhlich polaron, we applied in \cite{Polzer.2023} a point process representation of $\langle \Omega, e^{-TH(P)} \Omega \rangle$ by expanding the exponential in the Feynman--Kac formula into a series, based on the representation of the associated path measure developed by Mukherjee and Varadhan in \cite{MukherjeeVaradhan.2020}.
	For the proof of Theorem \ref{Theorem: Lower bound on EMR}, we will derive a similar point process representation under the more general assumptions given above. However, in order to avoid the difficulties that arise while dealing with the Feynman--Kac formula when $v$ is not square integrable, we will take a different approach and instead start from the Dyson-type expansion derived in \cite{Desio.Seiringer.2025}.
	This approach will additionally allow us to prove a Feynman--Kac formula for $\langle \Omega, e^{-TH(P)} \Omega \rangle$ that remains valid even if $v\notin L^2(\R^d)$. To state the latter, notice that the function
	\begin{equation*}
		|k|^2 \mapsto |v(k)|^2 e^{-t \omega(k)}
	\end{equation*}
	is under our assumptions on $\omega$ and $v$ for every $t\geq 0$ completely monotone. By Bernstein's theorem, there hence exists for every $t\geq 0$ a locally finite measure $\tilde \kappa(t,\cdot)$ on $(0, \infty)$ 
	such that for all $k\in \R^d$
	\begin{equation}
		\label{Equation: Definition of kappatilde}
		|v(k)|^2 e^{-t\omega(k)} = \int_{(0, \infty)} e^{- u |k|^2/2} \, \tilde \kappa(t, \mathrm du).
	\end{equation}
	We denote by $\kappa(t, \cdot)$ the unique measure such that for every non-negative measurable function $f$ on $(0, \infty)$
	\begin{equation*}
		\int_{(0, \infty)}f(u) \, \kappa(t, \mathrm du) = \int_{(0, \infty)} (2\pi/u)^{d/2} f(u^{-1/2}) \, \tilde \kappa(t, \mathrm du)
	\end{equation*}
	i.e.\ $\kappa(t, \cdot)$ has Radon--Nikodym derivative $u \mapsto (2\pi)^{d/2} u^d$ with respect to the pushforward of $\tilde \kappa(t, \cdot)$ under the map $u\mapsto u^{-1/2}$. We define
	
	\begin{equation*}
		w(t, x) \coloneqq \int_{(0, \infty)} e^{-u^2 |x|^2/2} \, \kappa(t, \mathrm du), \quad t\geq 0, \, x\in \R^d.
	\end{equation*}

	\begin{thrm}
		\label{Theorem: Feyman--Kac formula}
		For all $T\geq 0$ and $P\in \R^d$
		\begin{equation}
		\label{Equation: FK}
		\big\langle \Omega, e^{-TH(P)} \Omega \big\rangle = \mathbb E\bigg[e^{\mathrm i P \cdot X_{0, T}}  \operatorname{exp}\Big(\frac{1}{2}\int_{[0, T]^2} w(|t-s|, X_{s, t}) \, \mathrm ds \mathrm dt   \Big)  \bigg]
		\end{equation}
		where $(X_t)_{t \geq 0}$ is a Brownian motion on $\R^d$ and where $X_{s, t} \coloneqq X_t - X_s$ for $s, t\geq 0$.
	\end{thrm}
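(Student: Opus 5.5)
The strategy is to prove \eqref{Equation: FK} by expanding both sides into series and matching them term by term. On the left, start from the Dyson-type expansion of \cite{Desio.Seiringer.2025}. Expanding $e^{-TH(P)}$ in powers of $\phi(v)$ around $H_0(P) = \frac12|P-P_f|^2 + \mathrm d\Gamma(\omega)$ and taking the vacuum expectation, Wick's theorem writes $\langle \Omega, e^{-TH(P)}\Omega\rangle$ as a sum over $n$ and over pairings of $2n$ ordered times $0\le t_1\le \dots\le t_{2n}\le T$. Each pairing $\{s_j<t_j\}$ carries a phonon momentum $k_j$, the weight $|v(k_j)|^2 e^{-(t_j-s_j)\omega(k_j)}$, and the electron propagator
\[
\exp\Big(-\tfrac12\int_0^T \big|P-\textstyle\sum_{j:\, s_j\le r< t_j} k_j\big|^2\,\mathrm dr\Big).
\]
I take this as given, including its absolute convergence under \eqref{Equation: integrability Assumption}. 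That is the point of starting from \cite{Desio.Seiringer.2025} instead of from a Feynman--Kac formula with $v\in L^2$.

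Next, rewrite each term of this Dyson series probabilistically. The propagator is a Gaussian Fourier factor: for Brownian motion, $\mathbb E[e^{\mathrm i\sum_r \xi_r\cdot \Delta X_r}] = e^{-\frac12\int|\xi_r|^2 \mathrm dr}$. The momentum along $[0,T]$ is $P-\sum_j k_j\1_{[s_j,t_j)}$, so the propagator equals
\[
\mathbb E\Big[e^{\mathrm i P\cdot X_{0,T}}\prod_j e^{-\mathrm i k_j\cdot X_{s_j,t_j}}\Big].
\]
Fubini then lets me integrate over each $k_j$ inside the expectation, provided the integrals converge absolutely. For this I use \eqref{Equation: Definition of kappatilde}:
\[
\int |v(k)|^2 e^{-t\omega(k)} e^{-\mathrm i k\cdot x}\,\mathrm dk = \int_{(0,\infty)} \int e^{-u|k|^2/2-\mathrm i k\cdot x}\,\mathrm dk\;\tilde\kappa(t,\mathrm du) = \int (2\pi/u)^{d/2} e^{-|x|^2/(2u)}\,\tilde\kappa(t,\mathrm du).
\]
By the definition of $\kappa$, the right-hand side is $w(t,x)$, up to the $(2\pi)^{-d}$ normalisation built into the Fourier convention of \cite{Desio.Seiringer.2025}, which I expect to be absorbed by the stated definition of $\kappa$. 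The convenient feature is that the $k$-integral is done first for each fixed $u$, where it is an honest Gaussian integral, so no integrability of $|v|^2$ itself is needed.

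The $n$-th term therefore becomes $\mathbb E[e^{\mathrm iP\cdot X_{0,T}}\sum_{\text{pairings}}\int \prod_j w(t_j-s_j, X_{s_j,t_j})]$. Summing over pairings and dividing out the ordering, this equals $\frac{1}{n!}\big(\frac12\int_{[0,T]^2} w(|t-s|,X_{s,t})\,\mathrm ds\,\mathrm dt\big)^n$. Summing over $n$ gives the exponential in \eqref{FK}.

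The main obstacle is justifying the interchanges: the sum over $n$, the $k$- and $u$-integrals, and the Brownian expectation. Since $w>0$, every term on the probabilistic side is non-negative. I would therefore first prove the identity at $P=0$ with absolute values, showing that $\mathbb E[\exp(\frac12\int\!\!\int w)]$ equals the absolutely convergent Dyson series with $e^{-\mathrm ik\cdot x}$ replaced by its modulus-free Gaussian analogue. This uses Tonelli only, with finiteness coming from the bound in \cite{Desio.Seiringer.2025}. I also have to control the singularity of $w(t,\cdot)$ near $t=0$ via $\int_0 \int w(t,x)\,\mathrm dt$, using the $\omega^{-1}$ factor in \eqref{Equation: integrability Assumption}. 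Once that domination is in place, dominated convergence gives the general $P$ and the oscillatory factors. If convergence of the Dyson series is only available for small $T$, I would extend to all $T$ by the semigroup property combined with a Markov-property factorization of the path integral, or by analyticity in a coupling parameter.
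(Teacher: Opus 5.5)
Your proposal is correct and follows the paper's proof. The paper likewise rewrites the Desio--Seiringer expansion, via the bijection between Wick pairings with simplex times and collections of intervals (the $\frac{1}{n!}\int_{\mathcal I_T^n}$, or Poisson point process, form), as Brownian expectations of $\prod_i w(|\xi_i|, X_{\xi_i})$ and resums them to the exponential, justifying the interchanges at $P=0$ by non-negativity and for general $P$ by dominated convergence. The differences are minor: the paper treats $v\notin L^2$ by regularizing to $v_\varepsilon$ and using monotone convergence instead of your fixed-$u$ Gaussian $k$-integration (both work), and your separate control of $w$ near $t=0$ and your small-$T$ contingency are unnecessary, since finiteness at $P=0$ follows from the finite left-hand side and the expansion holds for all $T$.
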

	Provided that $v\in L^2(\R^d)$, we will see in the proof of Theorem \ref{Theorem: Feyman--Kac formula} that
		\begin{equation}
			\label{Equation: definition of w for UV regular interactions}
			w(t, x) = \int_{\R^d} |v(k)|^2 e^{-\omega(k)t}e^{\mathrm i k \cdot x} \, \mathrm dk
		\end{equation}
		such that \eqref{Equation: FK} reduces to the usual Feynman--Kac formula \cite{DySp20}. For the Fröhlich polaron, we have $\tilde \kappa(t, \mathrm du) = (4\pi^2)^{-1}\alpha e^{-t} \, \mathrm du$ such that $\kappa(t, \mathrm du) = \sqrt{2/\pi} \alpha e^{-t} \, \mathrm du$ yielding $w(t, x) = \alpha e^{-t} |x|^{-1}$. Hence, \eqref{Equation: FK} reduces once more to the known Feynman--Kac formula \cite{DySp20}, which is commonly obtained by first introducing a suitable ultraviolet regularization of $v$ and then taking the limit in \eqref{Equation: definition of w for UV regular interactions}.
		
	Considering Theorem \ref{Theorem: Feyman--Kac formula}, let us introduce the path measure
	\begin{equation}
		\label{Equation: Definition of hat P}
		\widehat{\mathbb P}_{T}(\mathrm dX) \coloneqq \frac{1}{Z_{T}} \operatorname{exp}\Big(\frac{1}{2}\int_{[0, T]^2} w(t-s, X_{s, t}) \mathrm ds \mathrm dt   \Big) \mathcal W(\mathrm dX)
	\end{equation}
	where $\mathcal W$ is the Wiener measure, i.e.\ the distribution of Brownian motion, and where $Z_{T} = \langle \Omega, e^{-TH(0)} \Omega \rangle$ acts as a normalization constant. 
	By Theorem \ref{Theorem: Feyman--Kac formula} we have
	\begin{equation}
		\label{Equation: fraction of diagonal elements is char fct}
			\frac{\langle \Omega, e^{-TH(P)} \Omega \rangle}{\langle \Omega, e^{-TH(0)} \Omega \rangle} = \widehat{\mathbb E}_T[e^{\mathrm i P \cdot X_{0, T}}]
	\end{equation}
	where the expected value $\widehat{\mathbb E}_T$ is taken with respect to the probability measure $\widehat{\mathbb P}_T$. For the Fröhlich model, the path measure $\widehat{\mathbb P}_{T}$ has in recent years been used to study the effective mass $m_{\operatorname{eff}}$ defined by
	\begin{equation}
		\label{Equation: Definition of effective mass}
		m_{\operatorname{eff}}^{-1} \coloneqq 2\lim_{P\downarrow 0} \frac{E(P) - E(0)}{|P|^2}. 
	\end{equation}
	In \cite{MukherjeeVaradhan.2020, MukherjeeVaradhan.2022}, it was shown that the path measure $\widehat{\mathbb P}_T$ of the Fröhlich polaron satisfies a central limit theorem under diffusive rescaling: there exists some $\sigma^2>0$ such that the distribution of $T^{-1/2} X_{0, T}$ converges to $\mathcal N(0, \sigma^2 I_{3})$ (i.e.\ the distribution of a centered Gaussian vector with covariance matrix $\sigma^2 I_{3}$) as $T\to \infty$, see also \cite{BetzSpohn.2005, Gubinelli2006, Mukherjee.2022, BetzPolzer.2022} for related central limit theorems. 
	Using the analyticity of $P \mapsto E(P)$ and $P\mapsto \1_{\{E(P)\}}(H(P))$ on $\mathcal I_0$ and the representation of the characteristic function of $X_{0, T}$ given in \eqref{Equation: fraction of diagonal elements is char fct}, one can then show that the effective mass can be expressed as
 	$m_{\operatorname{eff}} = (\sigma^2)^{-1}$ \cite{DySp20}. As it becomes apparent from the proof of the central limit theorem, the limiting variance $\sigma^2$ coincides with the limit of the rescaled mean square displacement $\hat \sigma^2 \coloneqq \lim_{T \to \infty} \hat \sigma_T^2(0)$ where
	\begin{equation*}
	 \hat \sigma_T^2(0) \coloneqq \frac{1}{dT} \widehat{\mathbb E}_T\big[|X_{0, T}|^2]
	\end{equation*}
	for $T>0$.
	This has then been applied in \cite{BetzPolzer.2023, Se24, BazaesMukherjeeSellkeVaradhan.2023} in order to derive lower bounds on the effective mass by deriving upper bounds on the mean square displacement, see also \cite{BetzSchmidtSellke.2025a, BetzSchmidtSellke.2025b} for recent estimates on the mean square displacement for other polaron-type measures. However, it does not seem to be known whether the equality $m_{\text{eff}}^{-1} = \hat \sigma^2$ holds for general polaron-type models, see the discussion in \cite[Section 1.2]{BetzSchmidtSellke.2025b}. For the models covered by our assumptions, we will in particular give an affirmative answer to the question raised in \cite{BetzSchmidtSellke.2025b} of whether sub-diffusivity of the path measure i.e.\ $\hat \sigma^2 = 0$ implies an infinite effective mass.
	
	Our approach in fact allows for more generality. Let $\mathcal E:[0, \infty) \to \R$ be the unique function such that $E(P) = \mathcal E(|P|^2)$ for all $P\in \R^d$. We will see in the proof of Theorem \ref{Theorem: Convergence of means square displacement} below that $\mathcal E$ is concave and increasing, compare also to \cite[Theorem 1]{Polzer.2023}, \cite[Corollary 1]{Desio.Seiringer.2025}. In particular the left and right derivatives $\partial^- \mathcal E$ and $\partial^+ \mathcal E$ of $\mathcal E$ exist on $(0, \infty)$ and $[0, \infty)$ respectively and
	\begin{equation*}
		 m_{\operatorname{eff}}^{-1} = 2 \cdot \partial^+ \mathcal E(0)
	\end{equation*}
	exists.
	We denote by $\operatorname{sinc}(x) = \sin(x)/x$ the sinus cardinalis (with $\operatorname{sinc}(0) \coloneqq 1$) and define for all $T>0$ and $P\in \R^d\setminus \{0\}$
	\begin{equation*}
		\hat \sigma^2_T(P) \coloneqq
			\frac{1}{T} \frac{ \widehat{\mathbb E}_T\big[\big(\tfrac{P}{|P|}\cdot X_{0, T}\big)^2 \operatorname{sinc}(P \cdot X_{0, T})\big]}{\widehat{\mathbb E}_T[\cos(P\cdot X_{0, T})]}.
	\end{equation*}
	
	We call $\rho$ right-continuous at $P_0$ if $|P| \mapsto \rho(P)$ is right-continuous at $|P_0|$. That is, we call $\rho$ right-continuous at $P\in \R^d\setminus \{0\}$ if $\lim_{\lambda \downarrow 1} \rho(\lambda P) = \rho(P)$ holds and right-continuous at $P=0$ if  $\lim_{P \to 0} \rho(P) = \rho(0)$ holds.
	\begin{thrm}
		\label{Theorem: Convergence of means square displacement}
		The following holds for all $P\in \R^d$.
		\begin{enumerate}
			\item 
		We have
		\begin{equation*}
			 2 \cdot \partial^+ \mathcal E(|P|^2) \leq \liminf_{T\to \infty} \hat \sigma^2_T(P).
		\end{equation*}
		If $P\neq 0$, we additionally have
		\begin{equation*}
			2 \cdot \partial^- \mathcal E(|P|^2) \geq \limsup_{T\to \infty} \hat \sigma^2_T(P).
		\end{equation*}

		\item If $\rho$ is right-continuous at $P$ and if $\rho(P)>0$ then 
		\begin{equation*}
			2 \cdot \partial^+ \mathcal E(|P|^2) = \lim_{T\to \infty} \hat \sigma^2_T(P).
		\end{equation*}
		\item We have
		\begin{equation*}
			\rho(P) \leq \liminf_{T\to \infty} \hat \sigma^2_T(P) \leq  \limsup_{T\to \infty} \hat \sigma^2_T(P) \leq 1.
		\end{equation*}
	\end{enumerate}
	\end{thrm}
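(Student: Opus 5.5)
The plan is to reduce all three statements to properties of the one-parameter family of functions
\begin{equation*}
g_T(r) \coloneqq -\frac1T \log \widehat{\mathbb E}_T\big[\cos(\sqrt r\, e\cdot X_{0,T})\big], \qquad r\geq 0,
\end{equation*}
where $e$ is a fixed unit vector. By radial symmetry and \eqref{Equation: fraction of diagonal elements is char fct} we have $\widehat{\mathbb E}_T[e^{\mathrm i P\cdot X_{0,T}}]=\widehat{\mathbb E}_T[\cos(P\cdot X_{0,T})]$. Hence \eqref{Equation: E from semigroup} gives $g_T(r)\to \mathcal E(r)-\mathcal E(0)$ for every $r$. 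Differentiating under the expectation at $r=|P|^2$ gives exactly $g_T'(|P|^2)=\tfrac12\hat\sigma_T^2(P)$, since $\partial_r \cos(\sqrt r y)=-\tfrac12 y^2\operatorname{sinc}(\sqrt r y)$.

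The structural input I would establish first is a Gaussian-mixture representation of $X_{0,T}$ under $\widehat{\mathbb P}_T$. Starting from Theorem \ref{Theorem: Feyman--Kac formula}, expand $\exp(\frac12\int\!\!\int w)$ into a series, i.e.\ into a Poisson-type point process of pairs $(s,t)$ with marks $u$ distributed according to $\kappa(|t-s|,\cdot)$. Each factor $e^{-u^2|X_{s,t}|^2/2}$ is a Gaussian weight in the Brownian path, so conditionally on the points and marks the path is a centered Gaussian process. In particular $e\cdot X_{0,T}$ is conditionally $\mathcal N(0,V)$ for a random variance $V$. Because the quadratic penalties only shrink the covariance relative to Brownian motion, $V\leq T$. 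Moreover, on time intervals not covered by any pair $(s,t)$ the increments remain free, so $V\geq \mathcal U$, the uncovered Lebesgue measure in $[0,T]$. This yields
\begin{equation*}
\widehat{\mathbb E}_T[\cos(\sqrt r\, e\cdot X_{0,T})]=\mathbf E_T[e^{-rV/2}],
\end{equation*}
which is completely monotone in $r$. Therefore $g_T$ is concave and increasing, and so is its limit $\mathcal E-\mathcal E(0)$; this also gives the concavity of $\mathcal E$ claimed before the theorem. It also gives
\begin{equation*}
\hat\sigma_T^2(P)=\frac{1}{T}\,\frac{\mathbf E_T[Ve^{-rV/2}]}{\mathbf E_T[e^{-rV/2}]},
\end{equation*}
that is, $\hat\sigma^2_T(P)$ is the mean of $V/T$ under the $r$-tilted law.

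Part (1) is then the standard fact that if concave functions converge pointwise, then $\partial^+\lim\leq\liminf g_T'$ and $\limsup g_T'\leq \partial^-\lim$ at interior points. For the upper bound in (3), $V\leq T$ gives $\hat\sigma_T^2\leq 1$ directly.

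For the lower bound $\rho(P)\leq\liminf\hat\sigma_T^2(P)$, I would use $V\geq\mathcal U$ and show that under the $r$-tilted law the expected fraction of uncovered time satisfies $\liminf_T \mathbf E_{T,r}[\mathcal U/T]\geq\rho(P)$. In the Fock picture, the probability that a fixed bulk time is uncovered is $\langle\Omega,e^{-tH(P)}\Omega\rangle\langle\Omega,e^{-(T-t)H(P)}\Omega\rangle/\langle\Omega,e^{-TH(P)}\Omega\rangle$. This ratio tends to $\rho(P)$ when a ground state exists (spectral theorem), and it is $\geq 0=\rho(P)$ trivially otherwise. As an alternative route when $\Delta(P)>0$, letting $P'\downarrow P$ in Theorem \ref{Theorem: Lower bound on EMR} with $P_0=P$ gives $E(P')-E(P)\geq\rho(P)\delta+o(\delta)$, i.e.\ $2\partial^+\mathcal E(|P|^2)\geq\rho(P)$, and part (1) then applies.

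For part (2), part (1) already gives $\liminf\geq 2\partial^+\mathcal E$, so only $\limsup_T\hat\sigma^2_T(P)\leq 2\partial^+\mathcal E(|P|^2)$ is missing. I expect this to be the main obstacle, because concavity alone only produces bounds in the wrong direction. My plan is to use the ground-state asymptotics $\langle\Omega,e^{-TH(P')}\Omega\rangle= \rho(P')e^{-TE(P')}(1+o(1))$ for $P'$ near $P$ on the right. Right-continuity of $\rho$ with $\rho(P)>0$ makes the prefactors uniformly comparable, so that $g_T(r+\varepsilon)-g_T(r)=\mathcal E(r+\varepsilon)-\mathcal E(r)+O(1/T)$ with an error uniform for small $\varepsilon$. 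Combined with the convexity of $\varepsilon\mapsto \log\mathbf E_{T,r}[e^{-\varepsilon V/2}]$, I would try to show that the tilted law of $V/T$ concentrates as $T\to\infty$. Concentration would make $\varepsilon\mapsto g_T(r+\varepsilon)$ asymptotically linear with slope $\lim\hat\sigma_T^2/2$, and letting $\varepsilon\downarrow0$ would identify that slope with $\partial^+\mathcal E(r)$. To get the concentration, I would try to control the variance of $V/T$ at order $O(1/T)$ using the renewal structure created by the uncovered gaps, whose density is bounded below by $\rho(P)>0$.
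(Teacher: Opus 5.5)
Your parts (1) and (3) are sound and match the paper. Your Gaussian mixture is the paper's measure $\widehat\Theta_{P,T}$ with $V=\sigma_T^2$. Concavity of $g_T$ is convexity of a log-Laplace transform, and $\hat D_T\le\sigma_T^2\le T$ gives both bounds in (3). Your direct Poisson computation of the probability that a bulk time $t$ is uncovered is a fine, more elementary substitute for the paper's renewal-transform identification: there are no points in the straddling region of mass $t(T-t)$, and $F_T=F_tF_{t,T}$ on that event.

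\emph{The gap is part (2).} Your estimate $g_T(r+\varepsilon)-g_T(r)=\mathcal E(r+\varepsilon)-\mathcal E(r)+O(1/T)$, uniform in small $\varepsilon$, is true but too weak. An $O(1/T)$ discrepancy is exactly what a concave $g_T$ produces when $g_T'$ drops by an amount of order one within a window of width $O(1/T)$ to the right of $r$. That is precisely the scenario you must exclude.

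The concentration step does not fix this. At fixed $\varepsilon$, $-\frac1T\log\mathbf E_{T,r}[e^{-\varepsilon V/2}]$ is a large-deviation quantity for $V/T$ and is not controlled by $\operatorname{Var}(V/T)\to0$. Moreover, literal asymptotic linearity of $\varepsilon\mapsto g_T(r+\varepsilon)$ would force $\mathcal E$ to be linear to the right of $r$. What your route actually needs is $g_T(r+\varepsilon)-g_T(r)\ge\varepsilon g_T'(r)-C\varepsilon^2$ uniformly in $T$. Since $g_T''(r')=-\frac{1}{4T}\operatorname{Var}_{T,r'}(V)$, this means $\operatorname{Var}_{T,r'}(V)\le CT$ for all tilts $r'\in[r,r+\varepsilon_0]$. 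That is a $C^{1,1}$-type bound on $\mathcal E$ which the hypotheses do not provide. A positive density $\rho(P)$ of uncovered time only gives covered stretches of finite mean length, not second-moment control.

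The missing idea, which is the paper's argument, is to let the increment of $|P|^2$ scale like $1/T$, so the exponent stays bounded and no concentration is needed. Let $r=|P|^2$ and let $P_T$ lie on the ray through $P$ (any direction if $P=0$) with $|P_T|^2=r+2\varepsilon/T$. Your mixture representation gives
\[
-\log\mathbf E_{T,r}\big[e^{-\varepsilon V/T}\big]=T\big(\mathcal E(r+2\varepsilon/T)-\mathcal E(r)\big)-\log f(T),\qquad f(T)\coloneqq\frac{e^{TE(P_T)}\langle\Omega,e^{-TH(P_T)}\Omega\rangle}{e^{TE(P)}\langle\Omega,e^{-TH(P)}\Omega\rangle}.
\]
The steps are then:
\begin{enumerate}
\item The first term tends to $2\varepsilon\,\partial^+\mathcal E(r)$.
\item By the spectral theorem the numerator of $f(T)$ is at least $\rho(P_T)$, which tends to $\rho(P)$ by right-continuity. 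The denominator decreases to $\rho(P)>0$. Hence $\liminf_T f(T)\ge 1$.
\item Since $V/T\in[0,1]$, the inequality $e^{-x}\le 1-x+x^2/2$ gives $-\log\mathbf E_{T,r}[e^{-\varepsilon V/T}]\ge\varepsilon\hat\sigma_T^2(P)-\varepsilon^2/2$.
\item Combining, $\varepsilon\limsup_T\hat\sigma_T^2(P)-\varepsilon^2/2\le 2\varepsilon\,\partial^+\mathcal E(r)$. Divide by $\varepsilon$ and let $\varepsilon\downarrow 0$; part (1) supplies the reverse inequality.
\end{enumerate}
In step 2 you need the prefactor ratio to tend to one, not merely be bounded below as in your ``uniformly comparable prefactors''. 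A residual constant $-\log c$ would survive the division by $\varepsilon$ in step 4.
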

	As mentioned above, for the Fröhlich polaron we have $E_{\text{ess}}(P) = E(0)+1$ for all $P\in \R^d$ and we easily obtain the following necessary and sufficient criterion for the conjectured boundedness of $\mathcal I_0$. 
	
	\begin{cor}
		For the Fröhlich polaron, $\mathcal I_0$ is bounded if and only if there exists some $P\in \R^d$ such that
		\begin{equation*}
			\liminf_{T\to \infty} \hat \sigma_T^2(P) = 0.
		\end{equation*}
	\end{cor}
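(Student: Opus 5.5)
The plan is to obtain the corollary from Theorem \ref{Theorem: Convergence of means square displacement}, together with three facts: $\mathcal E$ is increasing and concave, $E(P)\le E_{\operatorname{ess}}(P)=E(0)+1$ for all $P$ in the Fröhlich case \cite{Sp88}, and $P\in\mathcal I_0$ implies $\rho(P)>0$ (Perron--Frobenius, Appendix \ref{Appendix: positivity improving}). One preliminary observation does most of the work. If some $P_1\notin\mathcal I_0$, then $E(P_1)=E(0)+1$. Since $\mathcal E$ is increasing and bounded above by $E(0)+1$, it follows that $E(P')=E(0)+1=E_{\operatorname{ess}}(P')$ for every $|P'|\ge|P_1|$. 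Hence $\mathcal I_0\subset\{|P'|<|P_1|\}$, and $\mathcal E$ is constant on $[|P_1|^2,\infty)$.

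\emph{Sufficiency.} Suppose $\liminf_{T\to\infty}\hat\sigma_T^2(P)=0$ for some $P$. Part (3) of Theorem \ref{Theorem: Convergence of means square displacement} gives $0\le\rho(P)\le\liminf_{T}\hat\sigma^2_T(P)=0$, so $\rho(P)=0$. Then $H(P)$ has no ground state, so $P\notin\mathcal I_0$. The preliminary observation with $P_1=P$ shows that $\mathcal I_0$ is contained in the open ball of radius $|P|$, hence is bounded.

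\emph{Necessity.} Suppose $\mathcal I_0$ is bounded. Choose $P_1\notin\mathcal I_0$, and then $P\neq0$ with $|P|>|P_1|$. By the preliminary observation, $\mathcal E$ is constant on $[|P_1|^2,\infty)$. Since $|P|^2$ lies in the interior of this interval, $\partial^-\mathcal E(|P|^2)=0$. Because $P\neq0$, the second inequality in part (1) of Theorem \ref{Theorem: Convergence of means square displacement} applies and gives $\limsup_T\hat\sigma^2_T(P)\le 2\,\partial^-\mathcal E(|P|^2)=0$. Part (3) gives $\liminf_T\hat\sigma_T^2(P)\ge\rho(P)\ge0$, which is needed because $\operatorname{sinc}$ can be negative and so $\hat\sigma_T^2(P)$ is not obviously nonnegative. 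Combining the two bounds, $\liminf_T\hat\sigma_T^2(P)=0$ (indeed the limit exists and equals $0$).

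There is no real obstacle here. The only points needing care are the two sign issues. First, nonnegativity of the $\liminf$ must come from part (3), not from positivity of the integrand. Second, one must choose $|P|$ strictly larger than $|P_1|$, so that the left derivative is taken at an interior point of the region where $\mathcal E$ is constant, and so that $P\neq0$ as part (1) requires.
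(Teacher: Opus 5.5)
Your proof is correct and is the short derivation the paper intends; the paper writes out no proof and only says the criterion is ``easily obtained'' from Theorem~\ref{Theorem: Convergence of means square displacement}: sufficiency uses part (3) to get $\rho(P)=0$, hence $P\notin\mathcal I_0$, hence $\mathcal I_0\subset\{|P'|<|P|\}$ by monotonicity of $\mathcal E$ and $E\le E(0)+1$, and necessity uses the $\partial^-\mathcal E$ bound of part (1) at a point where $\mathcal E$ is locally constant. Your care in taking $|P|>|P_1|$ strictly is well placed. On the sign of $\hat\sigma_T^2(P)$, note that it is in any case nonnegative for every $T$ by Proposition~\ref{Proposition: Point process representation of mean square displacement}.
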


	\subsection*{Acknowledgments} The author acknowledges funding from the Swiss State Secretariat for Education, Research and Innovation (SERI) through
	the consolidator grant ProbQuant, and funding from the Swiss National Science Foundation through the
	NCCR SwissMAP grant.

	\section{The renewal transform under rank-one perturbations}
	\label{Section: The renewal transform under rank one perturbations}
	For the proof of Theorem \ref{Theorem: Lower bound on EMR}, we will use the concept of the renewal transform of a probability measure which we recently introduced in \cite{HinrichsPolzer.2025}. Let $\mu$ be a probability measure on $\R$ with lower bounded support and finite mean $m\coloneqq\int_\R x\,\mu(\mathrm d x)$. We denote by $(Z_t)_{t\geq 0}$ its Laplace transform, i.e.\
	\begin{equation*}
		Z_t \coloneqq \int_{[E, \infty)} e^{-tx} \, \mu(\mathrm dx)
	\end{equation*}
	for all $t\geq 0$, where $E \coloneqq \inf \operatorname{supp}(\mu)$ denotes the bottom of the support of $\mu$.
	Let $\mathbf P$ be a probability measure on the space
	\begin{equation*}
		\mathcal D \coloneqq \{y:[0, \infty) \to \{0, 1\}:\, y \text{ is càdlàg and }y_0 = 0\}
	\end{equation*}
	which we equip, as usual, with the $\sigma$-algebra generated by the evaluation maps $y\mapsto Y_t(y) \coloneqq y_t, \ t\geq 0$. We denote by $Y \coloneqq (Y_t)_{t\geq 0}$ the canonical stochastic process with law $\mathbf P$. We say $Y$ is dormant at time $t$, if $Y_t=0$ and active at time $t$ if $Y_t = 1$. We denote by
	\begin{equation*}
		d_1 \coloneqq \inf\{t\geq 0: Y_t = 1\}, \quad a_1 \coloneqq 
		\inf\{t-d_1: t\geq d_1,\, Y_t = 0\}
	\end{equation*}
	the first dormant and the first active period (which might be infinite) and define $T_1 \coloneqq d_1 + a_1$ to be the first return to being dormant. We call $Y$ an alternating renewal process (with respect to $\mathbf P$) if either $T_1 = \infty$ almost surely or if $(Y_{T_1+t})_{t\geq 0}$ is conditionally on the event $\{T_1< \infty\}$ independent of $(d_1, a_1)$ and has law $\mathbf P$. Provided that $t\geq 0$ is such that $\mathbf P(Y_t = 0)>0$, we define $\mathbf P_t$ to be the law of $(Y_s)_{0 \leq s \leq t}$ under the probability measure $\mathbf P(\, \cdot \,|Y_t = 0)$ and denote by $\mathbf E_t$ the expected value taken with respect to $\mathbf P_t$. 
	To simplify notation, we denote by $\operatorname{Exp}(0)$ the law of a random variable which is almost surely $+\infty$.
	\begin{thrm}[{\cite[Theorem 4.1]{HinrichsPolzer.2025}}]
		\label{Theorem: Existence renewal transform}
		There exists a unique probability measure $\mathbf P$ on $\mathcal D$ such that $Y$ is an alternating renewal process, such that $d_1$ is exponentially distributed and independent of $a_1$ and such that
		\begin{equation}
			\label{Equation: Representation of Z in terms of renewal transform}
			e^{Et} Z_t = \mathbf P(Y_t = 0)
		\end{equation}
		for all $t\geq 0$. We have $d_1 \sim \operatorname{Exp}(m-E)$ under $\mathbf P$ and for all $0= t_0 \leq t_1 \leq \hdots \leq t_n = t$
		\begin{equation}
			\label{Equation: fractions of Z in terms of renewal transform}
			\frac{1}{Z_t}\prod_{i=0}^{n-1} Z_{t_{i+1}-t_i} = \mathbf P_t\big(Y_{t_1} = Y_{t_2} = \hdots = Y_{t_{n-1}} = 0\big).
		\end{equation}		 
	\end{thrm}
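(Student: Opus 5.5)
The plan is to reduce everything to the normalized function $F(t)\coloneqq e^{Et}Z_t=\int_{[0,\infty)}e^{-ty}\,\mu_E(\mathrm dy)$, where $\mu_E$ is $\mu$ shifted by $-E$. It has $F(0)=1$ and $F'(0^+)=-(m-E)$, the latter finite because $m$ is finite. I would construct $\mathbf P$ by prescribing $d_1\sim\operatorname{Exp}(\lambda)$ and an independent (possibly defective) law $\nu$ of $a_1$ on $(0,\infty]$, and then concatenating i.i.d.\ cycles; the resulting process is automatically an alternating renewal process. Conditioning on the first cycle gives the renewal equation
\begin{equation*}
\mathbf P(Y_t=0)=e^{-\lambda t}+\int_0^t\lambda e^{-\lambda s}\int_{(0,\infty)}\mathbf P(Y_{t-s-a}=0)\1_{\{a\le t-s\}}\,\nu(\mathrm da)\,\mathrm ds .
\end{equation*}
I require this to hold with $\mathbf P(Y_t=0)$ replaced by $F(t)$. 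Writing $\hat F(s)=\int_0^\infty e^{-st}F(t)\,\mathrm dt$ and $\hat\nu(s)=\int e^{-sa}\nu(\mathrm da)$, the Laplace transform of the equation reads $\hat F=\frac{1}{s+\lambda}\bigl(1+\lambda\hat\nu\hat F\bigr)$. Hence the requirement is equivalent to
\begin{equation*}
\hat\nu(s)=\lambda^{-1}\bigl(s+\lambda-1/\hat F(s)\bigr).
\end{equation*}
Since Laplace transforms determine measures, this identity together with the value of $\lambda$ gives uniqueness. Indeed $\lambda$ is forced: $\mathbf P(Y_t=0)=1-\lambda t+o(t)$ near $0$, and this must match $F(t)=1-(m-E)t+o(t)$, so $\lambda=m-E$.

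The existence part is the main obstacle: I must show that $\lambda^{-1}\bigl(s+\lambda-1/\hat F(s)\bigr)$ is the Laplace transform of a sub-probability measure on $(0,\infty)$. The defect of mass is then put at $+\infty$, and if $\nu$ has no finite mass we take $\nu=\operatorname{Exp}(0)$. The route is as follows.
\begin{enumerate}
\item Observe that $\hat F(s)=\int\frac{\mu_E(\mathrm dy)}{s+y}$ is a Stieltjes function.
\item Invoke the classical fact that the reciprocal of a nonzero Stieltjes function is a complete Bernstein function. This gives
\begin{equation*}
1/\hat F(s)=a+bs+\int_{(0,\infty)}\frac{s}{s+y}\,\sigma(\mathrm dy), \qquad a,b\ge0 .
\end{equation*}
\item Use $\hat F(s)=s^{-1}-(m-E)s^{-2}+o(s^{-2})$ as $s\to\infty$ to get $b=1$. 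The same expansion gives $a+\sigma((0,\infty))=\lim_{s\to\infty}(1/\hat F(s)-s)=m-E=\lambda$; in particular $\sigma$ is finite. This expansion needs care with monotone convergence and uses $m<\infty$.
\item Substitute into the formula for $\hat\nu$:
\begin{equation*}
\hat\nu(s)=\lambda^{-1}\int_{(0,\infty)}\frac{y}{s+y}\,\sigma(\mathrm dy)=\int_0^\infty e^{-sa}\,\lambda^{-1}\!\int_{(0,\infty)} y\,e^{-ya}\,\sigma(\mathrm dy)\,\mathrm da .
\end{equation*}
So $\nu$ has an explicit nonnegative density, and its total mass is $\hat\nu(0)=\sigma((0,\infty))/\lambda\le1$.
\end{enumerate}
Having $\nu$, the bounded solution of the renewal equation is unique, since it is a Volterra equation with a sub-probability kernel. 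Therefore $\mathbf P(Y_t=0)=F(t)$, which is \eqref{Equation: Representation of Z in terms of renewal transform}.

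For \eqref{Equation: fractions of Z in terms of renewal transform}, I would show that the process regenerates at every fixed time at which it is dormant. Conditionally on $\{Y_{s}=0\}$, the process $(Y_{s+u})_{u\ge0}$ has law $\mathbf P$ and is independent of $(Y_r)_{r\le s}$. This follows from the renewal structure at the return times $T_k$ together with the memorylessness of the exponential dormant periods, which are independent of everything else. Iterating the Markov-type property gives $\mathbf P(Y_{t_1}=\dots=Y_{t_n}=0)=\prod_{i}F(t_{i+1}-t_i)$. Dividing by $\mathbf P(Y_t=0)=F(t)$, the factors $e^{E(t_{i+1}-t_i)}$ telescope to $e^{Et}$ and cancel. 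This yields exactly the claimed ratio $Z_t^{-1}\prod_i Z_{t_{i+1}-t_i}$.
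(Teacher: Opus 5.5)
Your proof is correct, but it takes a different route from the cited source. As the footnote in the proof of Proposition~\ref{Proposition: Renewal transform under rank-one perturbations} indicates, there $\mathbf P$ is built explicitly as the busy/idle indicator of an $M/G/\infty$-queue, i.e.\ from a Poisson point process of intervals. With $F(t)=e^{Et}Z_t$ and $\lambda=m-E$, customers arrive at rate $\lambda$, and the service time tail is forced to be $\bar G(t)=-(\log F)'(t)/\lambda$. This is a genuine non-increasing, $[0,1]$-valued tail precisely because Laplace transforms are log-convex. Consequently $\mathbf P(Y_t=0)=\exp\bigl(-\lambda\int_0^t\bar G(r)\,\mathrm dr\bigr)=F(t)$ is a Poisson void probability. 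The exponential dormant periods independent of the active periods come from the Poisson arrival stream. The product formula \eqref{Equation: fractions of Z in terms of renewal transform} is immediate, because the intervals started in $(t_{i-1},t_i]$ that cover $t_i$ form disjoint regions of intensity mass $-\log F(t_i-t_{i-1})$, so no regeneration argument is needed.

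You instead prescribe the active-period law $\nu$ directly and prove its existence through the identity $1/\hat F(s)=s+\lambda-\lambda\hat\nu(s)$, using that the reciprocal of a Stieltjes function is a complete Bernstein function. Your asymptotics $b=1$ and $a+\sigma((0,\infty))=m-E$ check out, as does the resulting completely monotone density of $\nu$. Your route gives an explicit formula for the active-period distribution, including its defect $a/\lambda$, which is nonzero iff $\int y^{-1}\mu_E(\mathrm dy)<\infty$. It also obtains existence and uniqueness from one and the same Laplace identity. The price is a heavier classical theorem plus the separate regeneration-at-dormant-times argument for the product formula, which you sketch correctly via memorylessness. The $M/G/\infty$ construction is more elementary, and it produces exactly the interval point-process picture that reappears in Proposition~\ref{Proposition: Renewal transform of spectral measures}.

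One minor point to add: the degenerate case $m=E$, i.e.\ $\mu=\delta_E$, where your formula for $\hat\nu$ divides by $\lambda=0$. There $d_1=\infty$ almost surely and $Y\equiv 0$, so both existence and uniqueness are trivial.
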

	We call the probability measure $\mathbf P$ the {\em renewal transform} of $\mu$ and the family of measures $(\mathbf P_t)_{t\geq 0}$ the {\em finite volume renewal transforms} of $\mu$.

	Let us now assume that $\mu$ is the spectral measure of some lower bounded self-adjoint operator $A$ acting on some Hilbert space $\mathcal H$ with respect to some unit vector $\psi \in \operatorname{dom}(|A|^{1/2})$. For $\alpha \in \R$, let $\mu_\alpha$ be the spectral measure of
	\begin{equation*}
		A_\alpha \coloneqq A + \alpha \langle \psi, \, \cdot \, \rangle \psi
	\end{equation*}
	with respect to $\psi$ and let $Z_{\alpha, t} \coloneqq \int_\R e^{-tx} \, \mu_\alpha(\mathrm dx)$ be its Laplace transform. We will now determine the finite volume renewal transforms $\mathbf P_{\alpha, t}$ of $\mu_\alpha$ in terms of the finite volume renewal transform $\mathbf P_{t}$ of $\mu$. For $t\geq 0$, we define
	\begin{equation*}
		D_t \coloneqq \int_0^t (1-Y_s) \, \mathrm ds
	\end{equation*}
	to be the total dormant time in the interval $[0, t]$.
	
	\begin{prop}
		\label{Proposition: Renewal transform under rank-one perturbations}
		For $t\geq 0$ the finite volume renewal transform $\mathbf P_{\alpha, t}$ of $\mu_\alpha$ is given by
		\begin{equation*}
			\mathbf P_{\alpha, t}(\mathrm dY) = \frac{1}{\mathbf Z_{\alpha, t}} e^{-\alpha D_t(Y)} \mathbf P_{t}(\mathrm dY)
		\end{equation*}
		where $\mathbf Z_{\alpha, t}= Z_{\alpha, t}/Z_t$ acts as a normalization constant.
	\end{prop}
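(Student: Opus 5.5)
The plan is to compare the two measures on the $\pi$-system of cylinder events $\{Y_{t_1}=\hdots=Y_{t_{n-1}}=0\}$. By \eqref{Equation: fractions of Z in terms of renewal transform}, these probabilities are known for $\mathbf P_{\alpha,t}$, and for the tilted measure they can be computed through a Dyson expansion of $e^{-tA_\alpha}$ around $e^{-tA}$.

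First I check that $\mathbf P_{\alpha,t}$ is well defined. Since $\psi\in\operatorname{dom}(|A|^{1/2})$ and the perturbation is bounded, $\mu_\alpha$ has lower bounded support and finite mean $m+\alpha$. Moreover $Z_{\alpha,t}>0$, because $\mu_\alpha$ is a probability measure, so Theorem \ref{Theorem: Existence renewal transform} applies to $\mu_\alpha$.

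Next, write $\Pi=\langle\psi,\cdot\,\rangle\psi$ and iterate Duhamel's formula
\begin{equation*}
e^{-tA_\alpha}=e^{-tA}-\alpha\int_0^t e^{-(t-s)A_\alpha}\Pi e^{-sA}\,\mathrm ds.
\end{equation*}
This gives a norm-convergent series. Its $n$-th term is $(-\alpha)^n$ times an integral over $0<s_1<\hdots<s_n<t$ of $e^{-(t-s_n)A}\Pi e^{-(s_n-s_{n-1})A}\Pi\cdots\Pi e^{-s_1A}$. Convergence holds because $\|e^{-sA}\|\le e^{-s\inf\sigma(A)}$, so the $n$-th term is bounded by $C_t|\alpha|^nt^n/n!$.

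Taking the expectation in $\psi$, each $\Pi$ splits the product into scalars. Hence $Z_{\alpha,t}=\sum_n(-\alpha)^n\int_{0<s_1<\hdots<s_n<t}Z_{s_1}Z_{s_2-s_1}\cdots Z_{t-s_n}\,\mathrm ds$.

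The same formula holds on any interval. Therefore for fixed $0=t_0<t_1<\hdots<t_n=t$ the product $\prod_i Z_{\alpha,t_{i+1}-t_i}$ is a sum over finite configurations of points in each interval $(t_i,t_{i+1})$. Each configuration contributes the product of $Z$'s over consecutive gaps of the merged point set $\{t_i\}\cup\{s_j\}$.

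Now I compute the tilted side. Expand $e^{-\alpha D_t}=\sum_k\frac{(-\alpha)^k}{k!}D_t^k$; since $0\le D_t\le t$, Fubini is justified. Then $D_t^k/k!$ is an integral of $\prod_j(1-Y_{s_j})$ over the ordered simplex $\{0<s_1<\hdots<s_k<t\}$. So
\begin{equation*}
\mathbf E_t\Big[e^{-\alpha D_t}\prod_{i=1}^{n-1}\1_{\{Y_{t_i}=0\}}\Big]=\sum_k(-\alpha)^k\int_{0<s_1<\hdots<s_k<t}\mathbf P_t\big(Y_{t_i}=0\ \forall i,\ Y_{s_j}=0\ \forall j\big)\,\mathrm ds.
\end{equation*}
By \eqref{Equation: fractions of Z in terms of renewal transform}, applied to the sorted union of the points (which is almost everywhere distinct), the integrand equals $Z_t^{-1}$ times the product of $Z$ over consecutive gaps.

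Each ordered $k$-tuple decomposes uniquely according to how many points fall into each $(t_i,t_{i+1})$, and the powers of $-\alpha$ multiply. Comparing with the previous paragraph, the sum equals $Z_t^{-1}\prod_i Z_{\alpha,t_{i+1}-t_i}$. The case $n=1$ gives $\mathbf E_t[e^{-\alpha D_t}]=Z_{\alpha,t}/Z_t=\mathbf Z_{\alpha,t}$, which identifies the normalization. Dividing, the tilted measure gives the event $\{Y_{t_1}=\hdots=Y_{t_{n-1}}=0\}$ probability $Z_{\alpha,t}^{-1}\prod_iZ_{\alpha,t_{i+1}-t_i}$. By \eqref{Equation: fractions of Z in terms of renewal transform} for $\mu_\alpha$, this is exactly $\mathbf P_{\alpha,t}$ of that event.

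Finally, these events (over all finite sets of times, with coinciding times allowed) form a $\pi$-system. For $\{0,1\}$-valued coordinates, inclusion–exclusion recovers all finite-dimensional distributions of $(Y_s)_{s\le t}$ from them, so the $\pi$-system generates the $\sigma$-algebra. Dynkin's $\pi$-$\lambda$ theorem then yields equality of the two probability measures.

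I expect the main obstacle to be the combinatorial bookkeeping in matching the merged-points expansion with the product of the interval-wise Dyson series. The key points there are the unique decomposition of the simplex into sub-simplices and the fact that the diagonal has Lebesgue measure zero, together with the justification of exchanging the sums and integrals. The latter follows from the factorial bounds above.
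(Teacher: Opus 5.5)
Your proof is correct, but it takes a different route from the paper.

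\textbf{The paper's route.} The paper works with the infinite-volume renewal transform $\mathbf P$. It shows that $f(\alpha,t)=e^{Et}Z_{\alpha,t}$ and $\tilde f(\alpha,t)=\mathbf E[e^{-\alpha D_t}\1_{\{Y_t=0\}}]$ solve the same nonlinear convolution equation $\partial_\alpha f(\alpha,t)=-\int_0^t f(\alpha,s)f(\alpha,t-s)\,\mathrm ds$, with the same initial value at $\alpha=0$.
\begin{itemize}
\item For $f$, this comes from differentiating the Dyson expansion once.
\item For $\tilde f$, it comes from writing $D_t=\int_0^t(1-Y_s)\,\mathrm ds$ and using that $\mathbf P$ regenerates at dormant times. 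This regeneration follows from the memorylessness of the exponential dormant periods.
\end{itemize}
A uniqueness lemma then identifies the two functions. Its proof takes Laplace transforms, which turns the equation into the Riccati equation $\partial_\alpha\mathcal L=-\mathcal L^2$; this is solved explicitly (the Sherman--Morrison identity) and finished with Lerch's theorem. The multi-time cylinder probabilities are obtained afterwards by factorizing once more at the times $t_i$.

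\textbf{Your route.} You expand both sides completely in powers of $\alpha$ and match them term by term. The only input from renewal theory is the finite-dimensional identity \eqref{Equation: fractions of Z in terms of renewal transform} for $\mathbf P_t$.

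\textbf{Comparison.} Your argument is more elementary and self-contained.
\begin{itemize}
\item It never uses the infinite-volume process, its regeneration property, or the uniqueness lemma with Lerch's theorem.
\item It treats all cylinder events at once.
\item It shows that the proposition is a purely algebraic consequence of \eqref{Equation: fractions of Z in terms of renewal transform}.
\end{itemize}
The paper's route, in return, avoids the bookkeeping of merging point configurations. It also produces two by-products: the infinite-volume identity $e^{Et}Z_{\alpha,t}=\mathbf E[e^{-\alpha D_t}\1_{\{Y_t=0\}}]$, and the Laplace-transform form of Sherman--Morrison, which the paper reuses in the proof of Theorem \ref{Theorem: Lower bound on EMR}.

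\textbf{A minor remark.} That your $\pi$-system generates the $\sigma$-algebra is more immediate than your inclusion--exclusion comment suggests: for $\{0,1\}$-valued $Y_s$, the single event $\{Y_s=0\}$ already generates $\sigma(Y_s)$.
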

	
	\begin{proof}
		We define the functions $f, \tilde f: \R \times [0, \infty) \to [0, \infty)$ by
		\begin{equation}
			\label{Equation: partial derivative of f wrt alpha}
			f(\alpha, t) \coloneqq e^{Et} Z_{\alpha, t}, \quad \tilde f(\alpha, t) \coloneqq \mathbf E\big[e^{-\alpha D_t} \1_{\{Y_t = 0\}}\big]
		\end{equation}
		where the expected value $\mathbf E$ is taken with respect to the renewal transform $\mathbf P$ of $\mu$. Expanding $e^{-tA_\alpha}$ into a Dyson series yields
		\begin{equation}
			\label{Equation: convolution eq for f}
			\partial_\alpha f(\alpha, t) =  e^{E t}\partial_\alpha \big\langle \psi, e^{-tA_\alpha} \psi \big\rangle = - e^{E t} \int_0^t Z_{\alpha, s} Z_{\alpha, t-s} \, \mathrm ds = - \int_0^t f(\alpha, s) f(\alpha, t-s) \, \mathrm ds.
		\end{equation}
		Moreover, we have by dominated convergence and Fubini's theorem
		\begin{equation}
			\label{Equation: partial derivative of tilde f wrt alpha}
			\partial_\alpha \tilde f(\alpha, t) = -\mathbf E\big[D_t e^{-\alpha D_t} \1_{\{Y_t = 0\}}\big] = -\int_0^t \mathbf E\big[e^{-\alpha D_t}  \1_{\{Y_s = Y_t = 0\}} \big] \, \mathrm ds.
		\end{equation}
		By the memorylessness property of the exponential distribution\footnote{Alternatively, one can show this using the Poisson point process of the $M/G/\infty$-queue which we used in the proof of \cite[Theorem 4.1]{HinrichsPolzer.2025} to explicitly construct $\mathbf P$.}, the processes $(Y_r)_{0\leq r \leq s}$ and $(Y_r)_{s\leq r \leq t}$ are under $\mathbf P(\, \cdot \, |Y_s = 0)$ independent, and the distribution of $(Y_{s+r})_{0\leq r \leq t-s}$ under $\mathbf P(\, \cdot \, |Y_s = 0)$ coincides with the distribution of $(Y_r)_{0\leq r \leq t-s}$ under $\mathbf P$. If we define $D_{s, t} \coloneqq \int_s^t (1-Y_r) \, \mathrm dr$ to be the total dormant time in the interval $[s, t]$, we hence have 
		\begin{align}
			\label{Equation: factor expected values perturbed measure}
			\mathbf E\big[e^{-\alpha D_t}  \1_{\{Y_s = Y_t = 0\}} \big]  &= \mathbf P(Y_s = 0) \mathbf E\big[e^{-\alpha D_s}|Y_s = 0\big] \mathbf E\big[e^{-\alpha D_{s, t}} \1_{\{Y_t = 0\}}|Y_s = 0\big] \nonumber \\
			&= \mathbf E\big[e^{-\alpha D_s}\1_{\{Y_s = 0\}}\big] \mathbf E\big[e^{-\alpha D_{t-s}} \1_{\{Y_{t-s} = 0\}}\big] 
		\end{align}
		such that \eqref{Equation: partial derivative of tilde f wrt alpha} becomes
		\begin{equation}
			\label{Equation: convolution eq for tilde f}
			\partial_\alpha \tilde f(\alpha, t) = - \int_0^t \tilde f(\alpha, s) \tilde f(\alpha, t-s) \, \mathrm ds.
		\end{equation}
		Since 
		\begin{equation*}
			f(0, t) = e^{Et} Z_{t} = \mathbf P(Y_t = 0) = \tilde f(0, t) 
		\end{equation*}
		we obtain with \eqref{Equation: convolution eq for f}, \eqref{Equation: convolution eq for tilde f} and Lemma \ref{Lemma: convolution eq has at most one sol} below that
		\begin{equation}
			\label{Equation: eq to determine bold Z}
			e^{Et} Z_{\alpha, t} = \mathbf E\big[e^{-\alpha D_t} \1_{\{Y_t = 0\}}\big]
		\end{equation}
		holds for all $t\geq0$. 
		We define the probability measure $\widehat{\mathbf P}_{\alpha, t}$ by
		\begin{equation*}
			\widehat{\mathbf P}_{\alpha, t}(\mathrm dY) \coloneqq \frac{1}{\mathbf Z_{\alpha, t}} e^{-\alpha D_t(Y)} \mathbf P_{t}(\mathrm dY).
		\end{equation*}
		We have with \eqref{Equation: eq to determine bold Z}
		\begin{equation}
			\label{Equation: representation of bold Z in terms of Z}
			\mathbf Z_{\alpha, t} \coloneqq \mathbf E_t\big[e^{-\alpha D_t}\big] =  \frac{e^{Et} Z_{\alpha, t}}{\mathbf P(Y_t = 0)} = Z_{\alpha, t}/Z_t.
		\end{equation}
		Another application of \eqref{Equation: eq to determine bold Z} yields for all $0= t_0 \leq t_1 \leq \hdots \leq t_n = t$
		\begin{align*}
			e^{Et}\prod_{i=0}^{n-1} Z_{\alpha, t_{i+1}-t_i} &= \prod_{i=0}^{n-1} e^{E(t_{i+1}-t_i)}Z_{\alpha, t_{i+1}-t_i} \\
			&= \prod_{i=0}^{n-1}\mathbf E\big[e^{-\alpha D_{t_i, t_{i+1}}} \1_{\{Y_{t_{i+1}} = 0\}} |Y_{t_i} = 0 \big] \\
			&= \mathbf E\big[e^{-\alpha D_t}  \1_{\{Y_{t_1} = \hdots = Y_{t_{n-1}} = Y_t = 0\}} \big] 
		\end{align*}
		where the last equality follows by inductively applying the argument in \eqref{Equation: factor expected values perturbed measure}.
		With \eqref{Equation: fractions of Z in terms of renewal transform} and \eqref{Equation: representation of bold Z in terms of Z}, we hence obtain
		\begin{align*}
			\mathbf P_{\alpha, t}\big(Y_{t_1} = \hdots = Y_{t_{n-1}} = 0 \big) &= 
			\frac{1}{Z_{\alpha, t}}\prod_{i=0}^{n-1}Z_{\alpha, t_{i+1}-t_i} \\
			&= \frac{\mathbf E\big[e^{-\alpha D_t}  \1_{\{Y_{t_1} = \hdots = Y_{t_{n-1}} = Y_t = 0\}} \big]}{e^{Et}Z_{\alpha, t}} \\
			&= \frac{\mathbf P(Y_t = 0)\mathbf E_t\big[e^{-\alpha D_t}  \1_{\{Y_{t_1} = \hdots = Y_{t_{n-1}} = 0\}} \big]}{e^{Et}Z_{\alpha, t}} \\
			&= \widehat{\mathbf P}_{\alpha, t}\big(Y_{t_1} = \hdots = Y_{t_{n-1}} = 0 \big).
		\end{align*}
		Hence, $\mathbf P_{\alpha, t}$ and $\widehat{\mathbf P}_{\alpha, t}$ coincide on an intersection stable generator and are hence equal.	
	\end{proof}
	
	\begin{lemma}
		\label{Lemma: convolution eq has at most one sol}
		Let $f_0:[0, \infty) \to (0, \infty)$ be a continuous function. Then there exists at most one function $f:\R \times [0, \infty) \to (0, \infty)$ such that the following holds.
		\begin{enumerate}
			\item The function $f$ is continuously differentiable in the first and continuous in the second variable.
			\item For every $\alpha_0 \geq 0$ there exist a constant $c>0$ such that 
			\begin{equation}
				\label{Equation: at most exponential growth}
				\forall t\geq 0\, \forall \alpha \in[-\alpha_0, \alpha_0]:\, |f(\alpha, t)| \leq c e^{ ct}.
			\end{equation}
			\item The function $f$ solves
			\begin{equation}
				\label{Equation: Diff eq for f}
				\begin{cases}
					f(0, t) = f_0(t) &\text{ for all } t\geq 0\\
					\partial_\alpha f(\alpha, t) = -\int_0^t f(\alpha, t-s) f(\alpha, s) \, \mathrm ds &\text{ for all }(\alpha, t) \in \R \times [0, \infty).
				\end{cases}
			\end{equation}
		\end{enumerate}
	\end{lemma}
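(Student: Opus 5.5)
The plan is to use the Laplace transform in $t$ to reduce the convolution equation to an ordinary differential equation in $\alpha$ for each fixed spectral parameter, where uniqueness is standard.

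Let $f, g$ be two functions satisfying (1)--(3) with the same $f_0$. Fix $\alpha_0>0$ and let $c$ be a constant for which \eqref{Equation: at most exponential growth} holds for both $f$ and $g$ on $[-\alpha_0,\alpha_0]$. For real $\lambda>c$ set
\begin{equation*}
F(\alpha,\lambda) \coloneqq \int_0^\infty e^{-\lambda t} f(\alpha,t)\,\mathrm dt,
\end{equation*}
and define $G(\alpha,\lambda)$ analogously from $g$. These integrals converge absolutely and uniformly in $\alpha\in[-\alpha_0,\alpha_0]$.

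First, I differentiate under the integral in $\alpha$. The derivative $\partial_\alpha f(\alpha,t)$ is a convolution of two functions bounded by $ce^{ct}$, so $|\partial_\alpha f(\alpha,t)|\le c^2 t e^{ct}$, which is integrable against $e^{-\lambda t}$ for $\lambda>c$. Since $\partial_\alpha f$ is continuous in $\alpha$, dominated convergence allows the differentiation. The convolution theorem for Laplace transforms then gives
\begin{equation*}
\partial_\alpha F(\alpha,\lambda) = -F(\alpha,\lambda)^2, \qquad F(0,\lambda)=F_0(\lambda)\coloneqq \int_0^\infty e^{-\lambda t}f_0(t)\,\mathrm dt,
\end{equation*}
and $G$ satisfies the same initial value problem.

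Second, I apply ODE uniqueness. The right-hand side $-F^2$ is locally Lipschitz in $F$, and $F(\cdot,\lambda)$ is $C^1$ on $[-\alpha_0,\alpha_0]$ and hence bounded there. The Picard--Lindelöf (Gronwall) uniqueness theorem therefore yields $F(\alpha,\lambda)=G(\alpha,\lambda)$ for all $|\alpha|\le\alpha_0$. One could also write the solution explicitly as $F_0/(1+\alpha F_0)$, which is well defined since $F$ is finite, but this is not needed.

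Third, I invert the transform. For fixed $\alpha$, $t\mapsto f(\alpha,t)-g(\alpha,t)$ is continuous, exponentially bounded, and has vanishing Laplace transform for all $\lambda>c$. By Lerch's uniqueness theorem it vanishes identically. Since $\alpha_0$ was arbitrary, $f=g$.

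The main obstacle is the justification that the Laplace transform commutes with $\partial_\alpha$. This rests on the growth bound being uniform on compact $\alpha$-intervals, which is precisely hypothesis (2). Positivity of $f$ plays no role in this argument.
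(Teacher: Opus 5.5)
Your proposal is correct and follows the paper's proof. Both take the Laplace transform in $t$, use the bound $|\partial_\alpha f(\alpha,t)|\le c^2te^{ct}$ to differentiate under the integral, derive the Riccati equation $\partial_\alpha \mathcal L(f(\alpha,\cdot))(\lambda) = -\mathcal L(f(\alpha,\cdot))(\lambda)^2$, and conclude with Lerch's theorem.

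The only deviation is in solving the Riccati equation. The paper integrates $\partial_\alpha\big(1/\mathcal L(f(\alpha,\cdot))(\lambda)\big)=1$ to get the explicit Sherman--Morrison form $\mathcal L(f_0)/(1+\alpha\mathcal L(f_0))$; this formula is reused in the proof of Theorem~\ref{Theorem: Lower bound on EMR}, and the division tacitly uses $f>0$. Your appeal to Picard--Lindelöf uniqueness for $y'=-y^2$ gives uniqueness directly and, as you note, dispenses with positivity.
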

	
	\begin{proof}
		Fix some $\alpha_0> 0$ and let $c > 0$ be such that \eqref{Equation: at most exponential growth} holds. By \eqref{Equation: Diff eq for f}, there exists some suitable constant $C>c$ such that
		\begin{equation}
			\label{Equation: at most exponential growth for derivative}
			|\partial_\alpha f(\alpha, t)| \leq c^2 t e^{c t} \leq C e^{C t}
		\end{equation}
		for all $t\geq 0$ and $\alpha \in [-\alpha_0, \alpha_0]$. By taking the Laplace transform in \eqref{Equation: Diff eq for f} we obtain for all $\lambda > C$
		\begin{equation}
			\label{Equation: diff eq for LT}
			\partial_\alpha \mathcal L(f(\alpha, \cdot ))(\lambda) = - \mathcal L(f(\alpha, \cdot ))(\lambda)^2
		\end{equation}
		where \eqref{Equation: at most exponential growth for derivative} and the continuity of $\partial_\alpha f$ allow us the interchange the Laplace transform with the partial derivative in $\alpha$.
		Hence,
		\begin{equation*}
			\partial_\alpha \Big(\frac{1}{\mathcal L(f(\alpha, \cdot ))(\lambda)}\Big) = 1 
		\end{equation*}
		yielding after integration
		\begin{equation}
			\label{Equation: Sherman–Morrison formula}
			\mathcal L(f(\alpha, \cdot ))(\lambda) = \frac{ \mathcal L(f_0)(\lambda)}{1+ \alpha \mathcal L(f_0)(\lambda)}
		\end{equation}
		for all sufficiently large $\lambda$. The statement hence follows from Lerch's theorem, i.e.\ from the fact that a continuous function is uniquely determined by the values of its Laplace transform on some set of the form $[\tilde C, \infty)$.
	\end{proof}

	\section{The renewal transform of the spectral measures}
	For the proof of Theorem \ref{Theorem: Lower bound on EMR}, we will identify the finite volume renewal transforms of the spectral measure of $H(P)$ with respect to $\Omega$. We will then compare the latter with the finite volume renewal transforms of a family of rank-perturbations of $H(P)$. 
	Our starting  point will be the Dyson-type expansion derived in \cite{Desio.Seiringer.2025}. To state the latter, let us introduce some notation. For $n\in \mathbb N$ let $\mathcal W_{2n}$ be the set of Wick pairings of $2n$ elements, i.e.\ the set of all permutations $\pi$ of $\{1, \hdots 2n\}$ satisfying
	\begin{equation*}
		\forall j\in \{1, \hdots, n-1\}: \pi(2j-1)< \pi(2j+1), \quad \forall j\in \{1, \hdots, n\}: \pi(2j-1)< \pi(2j).
	\end{equation*} 
	We identify $\pi\in \mathcal W_{2n}$ with the pairing $\big\{\big(\pi(2j-1), \pi(2j)\big): 1\leq j \leq n\big\}$ of $\{1, \hdots, 2n\}$. For $j\in \{0, \hdots, 2n\}$ and $\pi\in \mathcal W_{2n}$ let
	\begin{equation*}
		\mathcal M_j^\pi \coloneqq \big\{1\leq i\leq n:\, \pi(2i-1) \leq j < \pi(2i)\big\}
	\end{equation*}
	 be the set of pairs whose connecting line crosses $j$. Notice that in particular $\mathcal M_0^\pi = \mathcal M_{2n}^\pi = \emptyset$. We set for $\pi\in \mathcal W_{2n}$, $j\in \{0, \hdots, 2n\}$ and $k\in \R^{dn}$
	 \begin{equation*}
	 	\mathcal E_P^{(\pi, j)}(k) \coloneqq \frac{1}{2} \Big|P -\sum_{l\in \mathcal M_j^\pi} k_l\Big|^2 + \sum_{l\in \mathcal M_j^\pi} \omega(k_l)
	 \end{equation*}
	 and define the simplex $\triangle_{2n, T} \coloneqq \{\sigma \in (0, \infty)^{2n+1}:\, \sum_{i=0}^{2n} \sigma_i = T\}$.
	 As was shown in \cite[Theorem 1]{Desio.Seiringer.2025}, we then have for $T>0$
	 	\begin{equation}
	 		\label{Equation: Dyson-type expansion of semigroup}
	 		\langle \Omega, e^{-TH(P)} \Omega \rangle = e^{-T|P|^2/2} + \sum_{n=1}^\infty \sum_{\pi \in \mathcal W_{2n}} \int_{\triangle_{2n, T}} \mathrm d \sigma \int_{\R^{dn}} \mathrm dk \prod_{i=1}^n |v(k_i)|^2 \operatorname{exp}\Big(- \sum_{j=0}^{2n} \sigma_j \mathcal E_P^{(\pi, j)}(k)\Big).
	 	\end{equation}
	 We will now bring \eqref{Equation: Dyson-type expansion of semigroup} into a point process representation which generalizes the one which we derived in \cite{Polzer.2023} for the Fröhlich model starting from the Feynman--Kac formula. On that note, let us denote for $T> 0$ by
	 
	 \begin{equation*}
	 	\mathcal I_T \coloneqq \{[s, t]: 0 \leq s < t \leq T\}, \quad \mathcal I \coloneqq \{[s, t]: 0 \leq s < t\}
	 \end{equation*}
	 the set of compact intervals contained in $[0, T]$ and $[0, \infty)$ respectively and by
	 \begin{equation*}
	 	\mathcal I_T^\cup \coloneqq \bigcup_{n=0}^\infty  \mathcal I_T^n, \quad 	\mathcal I^\cup \coloneqq \bigcup_{n=0}^\infty \mathcal I^n
	 \end{equation*}
	 the sets of finite collections of the latter (where $\mathcal I_T^0 = \mathcal I^0  = \{\emptyset\}$). For an interval $[s, t] \in \mathcal I$, we denote its length by $|[s, t]|\coloneqq t-s$. 
	 For $\xi \in \mathcal I^\cup$, we denote by $N(\xi)$ the number of intervals contained in $\xi$.
 	 If $\xi \neq \emptyset$, we define the symmetric matrix $\Sigma(\xi) \in \R^{dN(\xi) \times dN(\xi) } $ by 
	 \begin{equation*}
	 	\Sigma(\xi) \coloneqq 
	 	\begin{pmatrix}
	 		\Sigma_{11} & \hdots & \Sigma_{1N(\xi)} \\
	 		\vdots &  & \vdots \\ 
	 		\Sigma_{N(\xi)1} & \hdots & \Sigma_{N(\xi)N(\xi)} 
	 	\end{pmatrix}
	 	\quad \text{ where }\Sigma_{ij} \coloneqq |\xi_i \cap \xi_j| I_{d}
	 \end{equation*}
	 for all $i, j\in \{1, \hdots, N(\xi)\}$, where $I_d \in \R^{d\times d}$ is the identity matrix. Moreover, we define for $T\geq 0$
	 \begin{equation*}
	 	\Sigma_T(\xi) \coloneqq \Sigma\big(\xi_1, \hdots, \xi_{N(\xi)}, [0, T]\big).
	 \end{equation*}
	 We denote by $\Xi_T$ the law of a Poisson point process on $\mathcal I_T$ with intensity measure $\1_{\{0 < s < t < T\}} \mathrm ds \mathrm dt$, where we identify  $\mathcal I_T$ with $\big\{(s, t):\, 0\leq s < t \leq T\big\} \subset \R^2$.
	 Finally, we define
	 
	 \begin{equation*}
	 	F_T(\xi, P) \coloneqq \int_{\R^{d N(\xi)}} \operatorname{exp}\bigg(- \frac{1}{2} \bigg \langle \begin{pmatrix}
	 		k \\
	 		P
	 	\end{pmatrix}, \Sigma_T(\xi)
	 	\begin{pmatrix}
	 		k \\
	 		P
	 	\end{pmatrix} \bigg \rangle \bigg)  \prod_{i=1}^{N(\xi)} |v(k_i)|^2 e^{- |\xi_i| \omega(k_i)} \, \mathrm dk
	 \end{equation*}
	 if $\xi \neq \emptyset$ and $F_T(\emptyset, P) \coloneqq e^{-T|P|^2/2}$.

	 \begin{prop}
	 	\label{Proposition: Point process representation of semigroup}
	 	We have for all $T\geq 0$ and $P\in \R^d$
	 	 \begin{equation*}
	 		\langle \Omega, e^{-TH(P)} \Omega \rangle = e^{T^2/2 } \int_{\mathcal I_T^\cup } F_T(\xi, P) \, \Xi_T(\mathrm d\xi).
	 	\end{equation*}
	 \end{prop}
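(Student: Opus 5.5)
The plan is to expand the Poisson point process integral into its Janossy (multiple-integral) form and match it term by term with the Dyson-type expansion \eqref{Equation: Dyson-type expansion of semigroup}. Since $\Xi_T$ is a Poisson point process on $\mathcal I_T$ whose intensity measure $\1_{\{0<s<t<T\}}\,\mathrm ds\,\mathrm dt$ has total mass $T^2/2$, we have for every non-negative measurable symmetric $G$ on $\mathcal I_T^\cup$
\begin{equation*}
	\int_{\mathcal I_T^\cup} G(\xi)\,\Xi_T(\mathrm d\xi) = e^{-T^2/2}\sum_{n=0}^\infty \frac{1}{n!}\int_{(\mathcal I_T)^n} G\big([s_1,t_1],\hdots,[s_n,t_n]\big)\prod_{i=1}^n \1_{\{s_i<t_i\}}\,\mathrm ds_i\,\mathrm dt_i .
\end{equation*}
The factor $e^{T^2/2}$ in the statement therefore cancels the Poisson normalization. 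The $n=0$ term is $F_T(\emptyset,P)=e^{-T|P|^2/2}$, which is the first term of \eqref{Equation: Dyson-type expansion of semigroup}. The case $T=0$ is trivial. Note that $F_T$ is invariant under permutations of the intervals: permuting the $\xi_i$ conjugates $\Sigma_T(\xi)$ by a block permutation, and we compensate by relabelling the $k_i$.

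For fixed $n\geq1$ I will first establish a change of variables. A pair $(\pi,\sigma)\in\mathcal W_{2n}\times\triangle_{2n,T}$ determines ordered times $\tau_j\coloneqq\sum_{i<j}\sigma_i$ with $0<\tau_1<\hdots<\tau_{2n}<T$. It also determines the unordered collection of $n$ intervals $\big\{[\tau_{\pi(2i-1)},\tau_{\pi(2i)}]:1\leq i\leq n\big\}$. Conversely, up to a Lebesgue null set (coinciding endpoints), every $n$-tuple of intervals in $\mathcal I_T^n$ has $2n$ distinct endpoints. Sorting these endpoints recovers $\sigma$, and the pairing recovers a unique $\pi\in\mathcal W_{2n}$; the normalization $\pi(2j-1)<\pi(2j+1)$ is exactly what removes the ordering of the pairs. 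Each unordered collection corresponds to $n!$ ordered tuples. The map from $\sigma$ to $(\tau_1,\hdots,\tau_{2n})$ is linear with unit Jacobian. Hence
\begin{equation*}
	\sum_{\pi\in\mathcal W_{2n}}\int_{\triangle_{2n,T}}\mathrm d\sigma\, g = \frac{1}{n!}\int_{(\mathcal I_T)^n} g\prod_i \mathrm ds_i\,\mathrm dt_i
\end{equation*}
for the corresponding integrands $g$.

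Next I will identify the integrands. The index set $\mathcal M^\pi_j$ consists of exactly those $l$ with $(\tau_j,\tau_{j+1})\subset\xi_l$, so
\begin{equation*}
	\sum_{j}\sigma_j\mathcal E_P^{(\pi,j)}(k)=\int_0^T\tfrac12\Big|P-\sum_l \1_{\xi_l}(r)k_l\Big|^2\mathrm dr+\sum_l|\xi_l|\,\omega(k_l).
\end{equation*}
Expanding the square, the first term becomes
\begin{equation*}
	\tfrac12\Big(T|P|^2-2\sum_l|\xi_l|\,P\cdot k_l+\sum_{l,m}|\xi_l\cap\xi_m|\,k_l\cdot k_m\Big).
\end{equation*}
This is $\tfrac12\langle(-k,P),\Sigma_T(\xi)(-k,P)\rangle$, because $|\xi_l\cap[0,T]|=|\xi_l|$. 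The substitution $k\mapsto-k$ in the $\mathrm dk$ integral leaves $\prod_i|v(k_i)|^2e^{-|\xi_i|\omega(k_i)}$ invariant, since $v$ and $\omega$ are radial. The resulting integrand is then exactly $F_T(\xi,P)$.

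All integrands are non-negative, so Tonelli's theorem justifies interchanging the sums and integrals. Summing over $n$ then yields the claim.

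The main obstacle is the bookkeeping in the bijection between $\mathcal W_{2n}\times\triangle_{2n,T}$ and $n$-tuples of intervals modulo permutation. One has to check that the Wick-pairing normalization matches precisely one labelling, so that the factor is $1/n!$, and that degenerate configurations are null sets.
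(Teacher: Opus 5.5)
Your proposal is correct and follows essentially the same route as the paper. The paper also rewrites $\sum_j\sigma_j\mathcal E_P^{(\pi,j)}(-k)$ as $\tfrac12\langle (k,P),\Sigma_T(\xi)(k,P)\rangle+\sum_l|\xi_l|\,\omega(k_l)$, and uses the unit-Jacobian bijection $(\pi,\sigma)\mapsto\xi$ onto the set $\mathcal J_{n,T}$ of tuples with increasing left endpoints; it then passes to $\frac{1}{n!}\int_{\mathcal I_T^n}$ by permutation invariance of $F_T$ and absorbs the Poisson normalization through the factor $e^{T^2/2}$, while your counting of the $n!$ labellings of unordered collections and your time-integral form $\int_0^T\tfrac12\big|P-\sum_l\1_{\xi_l}(r)k_l\big|^2\,\mathrm dr$ simply repackage these same steps.
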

	 
	 \begin{proof}
	 	Let us fix some $\sigma \in \triangle_{2n, T}$ and $\pi \in \mathcal W_{2n}$. We define $r\in [0, T]^{2n}$ and $\xi \in \mathcal I_T^n$ by
	 	\begin{equation}
	 		\label{Equation: xi as a function of sigma and pi}
	 		\forall i \in \{1, \hdots, 2n\}:\, r_i = \sum_{j=0}^{i-1} \sigma_j, \quad 	\forall i \in \{1, \hdots, n\}:\, \xi_i \coloneqq [r_{\pi(2i-1)}, r_{\pi(2i)}].
	 	\end{equation}
	 	We have for all $k\in \R^{dn}$
	 	\begin{align*}
	 		&\sum_{j=0}^{2n} \sigma_j \mathcal E_P^{(\pi, j)}(k) = \sum_{j=0}^{2n} \sigma_j \Big(\tfrac{1}{2}|P|^2 - \sum_{l\in \mathcal M_j^\pi} P \cdot k_l + \tfrac{1}{2} \sum_{l, m \in \mathcal M_j^\pi} k_l \cdot k_m +  \sum_{l \in \mathcal M_j^\pi} \omega(k_l)  \Big) \\
	 		&= \tfrac{1}{2}|P|^2 T - \sum_{l=1}^n P \cdot k_l \sum_{j=0}^{2n} \sigma_j \1_{ \{l\in \mathcal M_j^\pi  \} } + \tfrac{1}{2}\sum_{l, m=1}^n k_l \cdot k_m \sum_{j=0}^{2n} \sigma_j \1_{ \{l, m\in \mathcal M_j^\pi  \} } + \sum_{l=1}^n \omega(k_l) \sum_{j=0}^{2n} \sigma_j \1_{ \{l\in \mathcal M_j^\pi  \} }.
	 	\end{align*}
	 	Now notice that we have for every $l \in \{1, \hdots, n\}$
	 	\begin{equation*}
	 		\sum_{j=0}^{2n} \sigma_j \1_{ \{l\in \mathcal M_j^\pi  \}} = \sum_{j= \pi(2l-1)}^{\pi(2l)-1} \sigma_j = |\xi_l|
	 	\end{equation*}
	 	and for all $l, m\in \{1, \hdots, n\}$ 
	 		\begin{equation*}
	 		\sum_{j=0}^{2n} \sigma_j \1_{ \{l, m\in \mathcal M_j^\pi  \}} = \sum_{j= \max\{\pi(2l-1), \pi(2m-1) \}  }^{\min \{ \pi(2l), \pi(2m)\} - 1} \sigma_j = |\xi_l \cap \xi_m|.
	 	\end{equation*}
	 	We hence obtain
	 	\begin{align}
	 		\label{Equation: rewrite E from sigma to xi}
	 		\sum_{j=0}^{2n} \sigma_j \mathcal E_P^{(\pi, j)}(-k) &= \tfrac{1}{2}|P|^2 T + \sum_{l=1}^n  |\xi_l| P \cdot k_l  + \tfrac{1}{2}\sum_{l, m=1}^n |\xi_l \cap \xi_m| k_l \cdot k_m + \sum_{l=1}^n |\xi_l| \omega(k_l) \nonumber \\
	 		&= \tfrac{1}{2} \Big \langle \begin{pmatrix}
	 			k \\
	 			P
	 		\end{pmatrix}, \Sigma_T(\xi)
	 		\begin{pmatrix}
	 			k \\
	 			P
	 		\end{pmatrix} \Big \rangle + \sum_{l=1}^n |\xi_l| \omega(k_l).
	 	\end{align}
	 	We define
	 	\begin{equation*}
	 		\mathcal J_{n, T} \coloneqq \big\{\big([s_1, t_1], \hdots, [s_n, t_n]\big) \in \mathcal I_{T}^n:\, 0<s_1 < \hdots < s_n< T, \quad \forall i \neq j:\, s_i \neq t_j,\, t_i \neq t_j \}.
	 	\end{equation*}
	 	Notice that
	 	\begin{equation*}
	 		\Phi : \mathcal W_{2n} \times \triangle_{2n, T} \to \mathcal J_{n, T},  \quad (\pi, \sigma) \mapsto \xi(\pi, \sigma)
	 	\end{equation*}
	 	with $\xi(\pi, \sigma)$ being defined by \eqref{Equation: xi as a function of sigma and pi} is a bijection with $\Phi(\pi, \cdot)$ having for every $\pi \in \mathcal W_{2n}$ functional determinant $\pm 1$. Combining \eqref{Equation: Dyson-type expansion of semigroup} and \eqref{Equation: rewrite E from sigma to xi}, we hence obtain\footnote{Recall that we identify $\mathcal I_T$ with $\big\{(s, t):\, 0\leq s < t \leq T\big\} \subset \R^2$ such that the integrals over $\mathcal J_{n, T}, \mathcal I_T^n$ are taken with respect to the Lebesgue measure.}
	 	\begin{align*}
	 		\langle \Omega, e^{-TH(P)} \Omega \rangle &= e^{-T|P|^2/2} + \sum_{n=1}^\infty \int_{ \mathcal J_{n, T} } \mathrm d \xi \int_{\R^{dn}} \mathrm dk \, \prod_{i=1}^n |v(k_i)|^2 e^{- |\xi_i| \omega(k_i)} \operatorname{exp}\Big(- \frac{1}{2} \Big \langle \begin{pmatrix}
	 			k \\
	 			P
	 		\end{pmatrix}, \Sigma_T(\xi)
	 		\begin{pmatrix}
	 			k \\
	 			P
	 		\end{pmatrix} \Big \rangle \Big)\\
	 		&= e^{-T|P|^2/2} + \sum_{n=1}^\infty \frac{1}{n!} \int_{\mathcal I_{T}^n} \mathrm d \xi \int_{\R^{dn}} \mathrm dk \, \prod_{i=1}^n |v(k_i)|^2 e^{- |\xi_i| \omega(k_i)} \operatorname{exp}\Big(- \frac{1}{2} \Big \langle \begin{pmatrix}
	 			k \\
	 			P
	 		\end{pmatrix}, \Sigma_T(\xi)
	 		\begin{pmatrix}
	 			k \\
	 			P
	 		\end{pmatrix} \Big \rangle \Big) \\
	 		&= e^{T^2/2 } \int_{\mathcal I_T^\cup } \Xi_T(\mathrm d\xi) \, F_T(\xi, P)
	 	\end{align*}
	 	where we used in the second to last inequality that $F_T(\xi, P)$ stays invariant under permutation of the intervals contained in $\xi$ and in the last equality that the intensity measure of $\Xi_T$ has total mass $T^2/2$.
	 \end{proof}
	 
	 We now notice that the function $F_{T}(\, \cdot\,, P)$ has a product structure under disjoint ``clusters'' of intervals. For all $\xi \in \mathcal I^\cup$ and $s<t$, we define
	 
	 \begin{equation*}
	 	\Sigma_{s, t}(\xi) \coloneqq \Sigma \big(\xi_1, \hdots, \xi_{N(\xi)}, [s, t]\big)
	 \end{equation*}
	 as well as
	 	 \begin{equation*}
	 	F_{s, t}(\xi, P) \coloneqq \int_{\R^{d N(\xi)}} \operatorname{exp}\bigg(- \frac{1}{2} \bigg \langle \begin{pmatrix}
	 		k \\
	 		P
	 	\end{pmatrix}, \Sigma_{s, t}(\xi)
	 	\begin{pmatrix}
	 		k \\
	 		P
	 	\end{pmatrix}\bigg \rangle \bigg)  \prod_{i=1}^{N(\xi)} |v(k_i)|^2 e^{- |\xi_i| \omega(k_i)} \, \mathrm dk.
	 \end{equation*}
	 Notice that if $\theta_s \xi \coloneqq (\xi_1 -s, \hdots, \xi_{N(\xi)} -s)$ denotes the translation of $\xi$ by $-s$ then
	 \begin{equation*}
	 	F_{s, t}(\xi, P) = F_{t-s}(\theta_s \xi, P).
	 \end{equation*}
	 If $\xi, \zeta \in \mathcal I_T^\cup$ are such that $\xi$ and $\zeta$ fall into $[0, s]$ and $[s, T]$ respectively, in the sense that $\xi_i \subseteq [0, s]$ for all $i \in \{1, \hdots, N(\xi)\}$ and $\zeta_j \subseteq [s, T]$ for all $j \in \{1, \hdots, N(\zeta)\}$, then for all $k \in \R^{dN(\xi)}$ and all $\tilde k \in \R^{dN(\zeta)}$
	 \begin{equation*}
	 	\bigg \langle \begin{pmatrix}
	 		k \\
	 		\tilde k \\
	 		P
	 	\end{pmatrix}, \Sigma_T(\xi, \zeta)
	 	\begin{pmatrix}
	 		k\\
	 		\tilde k \\
	 		P
	 	\end{pmatrix} \bigg \rangle =  \Big \langle \begin{pmatrix}
	 	k \\
	 	P
	 	\end{pmatrix}, \Sigma_s(\xi)
	 	\begin{pmatrix}
	 	k \\
	 	P
	 	\end{pmatrix} \Big \rangle
	 	+
	 	\Big \langle \begin{pmatrix}
	 		\tilde k \\
	 		P
	 	\end{pmatrix}, \Sigma_{s, T}(\zeta)
	 	\begin{pmatrix}
	 		\tilde k \\
	 		P
	 	\end{pmatrix} \Big \rangle
	 \end{equation*}
	 such that by Fubini's theorem
	 \begin{equation*}
	 	F_T((\xi, \zeta), P) = F_s(\xi, P)\cdot F_{s, T}(\zeta, P).
	 \end{equation*}
	 We will now apply this product structure in order to determine the renewal transform of the spectral measure of $H(P)$ with respect to $\Omega$. For $t\geq 0$, we define $\hat Y_t: \mathcal I^\cup \to \{0, 1\}$ by
	 \begin{equation*}
		 \hat Y_t(\xi) \coloneqq \begin{cases}
		 	1 \quad &\exists i\in \{1, \hdots, N(\xi)\} \text{ s.t. } t \in \xi_i \\
		 	0 \text &\text{else}.
		 \end{cases}
	 \end{equation*}

	 \begin{prop}
	 	\label{Proposition: Renewal transform of spectral measures}
	 	 The finite volume renewal transform $\mathbf P_{P, T}$ of the spectral measure of $H(P)$ with respect to $\Omega$ coincides with the distribution of $(\hat Y_t)_{0\leq t \leq T}$ under
	 	\begin{equation*}
	 		\widehat \Xi_{P, T}(\mathrm d\xi) \coloneqq \frac{e^{T^2/2}}{\big\langle \Omega, e^{-TH(P)} \Omega \big \rangle } F_T(\xi, P) \, \Xi_T(\mathrm d\xi).
	 	\end{equation*}
	 \end{prop}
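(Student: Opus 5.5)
The plan is to use \eqref{Equation: fractions of Z in terms of renewal transform} from Theorem \ref{Theorem: Existence renewal transform}. By that identity, the finite volume renewal transform $\mathbf P_{P,T}$ is determined by its finite-dimensional probabilities $\mathbf P_{P,T}(Y_{t_1}=\hdots=Y_{t_{n-1}}=0)$, since the events $\{Y_{t_1}=\hdots=Y_{t_{n-1}}=0\}$ form an intersection-stable generator. For $\{0,1\}$-valued càdlàg processes this family determines the law by inclusion–exclusion, as in the proof of Proposition \ref{Proposition: Renewal transform under rank-one perturbations}. It therefore suffices to show that for all $0=t_0\le t_1\le\hdots\le t_n=T$
\begin{equation*}
\widehat\Xi_{P,T}\big(\hat Y_{t_1}=\hdots=\hat Y_{t_{n-1}}=0\big) = \frac{1}{Z_T}\prod_{i=0}^{n-1} Z_{t_{i+1}-t_i}, \qquad Z_t := \langle \Omega, e^{-tH(P)}\Omega\rangle .
\end{equation*}
I also need to check that $(\hat Y_t)$ has càdlàg paths with values in $\mathcal D$ restricted to $[0,T]$. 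The intervals are closed, so $\hat Y$ is upper semicontinuous. However, $\Xi_T$ almost surely has no interval starting at $0$, and endpoints are almost surely distinct, so modifying on a null set of times (or noting that the fidi statements only involve fixed times, which are almost surely not endpoints) suffices.

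To compute the left-hand side, I will decompose $\xi$ according to the times $t_1,\hdots,t_{n-1}$. The event $\{\hat Y_{t_i}=0 \ \forall i\}$ means that no interval of $\xi$ contains any $t_i$. On this event each interval lies in exactly one block $[t_i,t_{i+1}]$, so $\xi=(\xi^{(0)},\hdots,\xi^{(n-1)})$ with $\xi^{(i)}$ falling into $[t_i,t_{i+1}]$. By the restriction property of Poisson point processes, $\Xi_T$ restricted to this event equals the product of independent Poisson processes $\Xi^{(i)}$ on $\mathcal I_{[t_i,t_{i+1}]}$, with intensity $\1_{\{t_i<s<t<t_{i+1}\}}\mathrm ds\,\mathrm dt$. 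This product comes multiplied by the probability factor $\exp\big(-(T^2/2-\sum_i (t_{i+1}-t_i)^2/2)\big)$, which is the void probability of the complementary region of intervals crossing some $t_i$.

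Next I apply the product structure established above iteratively, $F_T(\xi,P)=\prod_i F_{t_i,t_{i+1}}(\xi^{(i)},P)$, together with translation invariance $F_{t_i,t_{i+1}}(\xi^{(i)},P)=F_{t_{i+1}-t_i}(\theta_{t_i}\xi^{(i)},P)$. By Fubini the integral factorizes, giving
\begin{equation*}
\int \1_{\{\hat Y_{t_i}=0\,\forall i\}} F_T\,\mathrm d\Xi_T = e^{-T^2/2}\prod_{i=0}^{n-1} e^{(t_{i+1}-t_i)^2/2}\int F_{t_{i+1}-t_i}\,\mathrm d\Xi_{t_{i+1}-t_i} = e^{-T^2/2}\prod_{i} Z_{t_{i+1}-t_i},
\end{equation*}
where the last equality is Proposition \ref{Proposition: Point process representation of semigroup} applied to each block. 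Multiplying by the normalization $e^{T^2/2}/Z_T$ yields exactly the desired identity.

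The main obstacle is the bookkeeping in the Poisson restriction step. One must check that the exponential factors $e^{T^2/2}$ and $e^{-(t_{i+1}-t_i)^2/2}$ match exactly. This works because the Poisson density on a region $A$ relative to Lebesgue measure on configurations is $e^{-|A|}$, and the measure of the non-crossing region is $\sum_i (t_{i+1}-t_i)^2/2$. The cleanest way to handle this is to write everything in the unnormalized form $\sum_n \frac{1}{n!}\int_{\mathcal I_T^n}\mathrm d\xi$ used in the proof of Proposition \ref{Proposition: Point process representation of semigroup}, and split the $n$ intervals into blocks with multinomial coefficients. One must also handle the degenerate cases $t_i=t_{i+1}$, which follow by convention since $Z_0=1$.
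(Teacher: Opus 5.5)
Your proposal is correct and follows essentially the same route as the paper. Both arguments rest on the same steps:
\begin{enumerate}
\item independence of the Poisson process on the non-crossing blocks and on the region of intervals crossing some $t_i$ (the paper obtains this via a single process $\eta$ and the event $A$ of no crossing intervals);
\item the product structure and translation invariance of $F$;
\item Proposition~\ref{Proposition: Point process representation of semigroup} applied on each block, with the void probabilities $e^{-(t_{i+1}-t_i)^2/2}$ cancelling against $\mathbb P(A)$;
\item \eqref{Equation: fractions of Z in terms of renewal transform} together with the intersection-stable generator argument.
\end{enumerate}
You also note that $\hat Y$ is not c\`adl\`ag at right endpoints, and that this is harmless because only fixed-time marginals enter. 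This is a small point the paper passes over silently.
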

	 
	 \begin{proof}
	 	As before, we identify $\mathcal I$ with $\big\{(s, t):\, 0\leq s < t\big\} \subset \R^2$.
	 	Let $\eta$ denote a Poisson point process on $\R^2$ with intensity measure $\1_{\{0 < s < t\}} \mathrm ds \mathrm dt$. For $T>0$, let $\eta_T$ denote the restriction of $\eta$ to $[0, T]^2$ such that $\eta_T \sim \Xi_T$. Notice that in particular
	 	\begin{equation*}
	 		\mathbb P(N(\eta_T) = 0) = e^{-T^2/2}.
	 	\end{equation*}
	 	For $0\leq s < t \leq T$, we denote by $\eta_{s, t}$ the restriction of $\eta_T$ to $(s, t] \times (s, t]$ and by $\tilde \eta_{s, t}$ the restriction of $\eta_T$ to $(s, t] \times (t, T]$.
	 	Let $0 = r_0 < r_1 < \hdots r_{k} < r_{k+1} = T$.
	 	Since $\eta_{0, r_1}, \eta_{r_1, r_2}, \hdots, \eta_{r_k, T}$ are independent with $\theta_{r_i}  \eta_{r_i, r_{i+1}} \overset{d}{=} \eta_{r_{i+1}-r_{i}}$ for all $i\in \{0, \hdots, k\}$, we obtain with Proposition \ref{Proposition: Point process representation of semigroup}
	 	\begin{equation*}
	 		\prod_{i=0}^{k} \big \langle\Omega, e^{-(r_{i+1}-r_i)H(P)} \Omega \big\rangle
	 		= \frac{1}{ \prod_{i=0}^{k} \mathbb P(N(\eta_{r_i, r_{i+1}}) = 0)} \mathbb E\Big[\prod_{i=0}^{k} F_{r_i, r_{i+1}}(\eta_{r_i, r_{i+1}}, P)\Big].
	 	\end{equation*}
	 	The event 
	 	\begin{equation*}
	 		A \coloneqq \{\hat Y_{r_1}(\eta_T) =  \hdots = \hat Y_{r_k}(\eta_T) = 0 \} = \{N(\tilde \eta_{0, r_1}) = N(\tilde \eta_{r_1, r_2}) = \hdots = N(\tilde \eta_{r_{k-1}, r_k}) = 0\},
	 	\end{equation*}
	 	is independent of $\big(\eta_{0, r_1}, \eta_{r_1, r_2}, \hdots, \eta_{r_k, T}\big)$ and conditionally on $A$ we have
	 	\begin{equation*}
	 		F_T(\eta_T, P) = \prod_{i=0}^{k} F_{r_i, r_{i+1}}(\eta_{r_i, r_{i+1}}, P).
	 	\end{equation*}
	 	We hence obtain
	 	\begin{equation*}
	 		\prod_{i=0}^{k} \big\langle \Omega, e^{-(r_{i+1}-r_i)H(P)} \Omega \big\rangle = \frac{1}{ \prod_{i=0}^{k} \mathbb P(N(\eta_{r_i, r_{i+1}}) = 0)} \mathbb E\big[F_{T}(\eta_{T}, P)|A\big].
	 	\end{equation*}
	 	Using that
	 	\begin{align*}
	 		\mathbb P(N(\eta_T) = 0) &= \mathbb P\big(\{N(\eta_{0, r_1}) = \hdots = N(\eta_{r_k, T}) = 0\} \cap A\big) = \mathbb P(A) \cdot \prod_{i=0}^k \mathbb P(N(\eta_{r_i, r_{i+1}})= 0)
	 	\end{align*}
	 	we hence obtain with Proposition \ref{Proposition: Point process representation of semigroup}
	 	\begin{equation*}
	 		\frac{\prod_{i=0}^{k} \big\langle \Omega, e^{-(r_{i+1} -r_i)H(P)} \Omega \big\rangle}{\big\langle \Omega, e^{-T H(P)} \Omega \big\rangle}
	 		= \frac{\mathbb E\big[F_{T}(\eta_{T}, P) \1_A \big] }{\mathbb P(A) \prod_{i=0}^{k} \mathbb P(N(\eta_{r_i, r_{i+1}}) = 0)} \cdot \frac{\mathbb P(N(\eta_T) = 0)}{\mathbb E[F_{T}(\eta_{T}, P)]}
	 		=\frac{\mathbb E\big[F_{T}(\eta_{T}, P) \1_A \big] }{\mathbb E\big[F_{T}(\eta_{T}, P)\big] }
	 	\end{equation*}
	 	such that we obtain with \eqref{Equation: fractions of Z in terms of renewal transform} and the definition of $\widehat \Xi_{P, T}$
	 	\begin{equation*}
	 		\mathbf P_{P, T}(Y_{r_1} = \hdots = Y_{r_k} = 0)= \widehat \Xi_{P, T}( \hat Y_{r_1} = \hdots = \hat Y_{r_k} = 0).
	 	\end{equation*}
	 	Hence, the probability measure $\mathbf P_{P,T}$ coincides on an intersection stable generator with the distribution of $(\hat Y_t)_{0\leq t \leq T}$ under $\widehat \Xi_{P, T}$ and the claim follows.
	 \end{proof}
	 
	 \begin{cor}
	 	\label{Corollary: Point process representation of vacuum overlap}
	 	For all $P \in \R^d$
	 	\begin{equation*}
	 		\rho(P) = \lim_{T\to \infty} \mathbb E_{\widehat{\Xi}_{P, T}}[\hat D_T/T] \quad \text{ where }\quad \hat D_T \coloneqq \int_0^T (1-\hat Y_s) \, \mathrm ds.
	 	\end{equation*}
	 \end{cor}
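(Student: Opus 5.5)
Since $\hat D_T$ under $\widehat\Xi_{P,T}$ has, by Proposition \ref{Proposition: Renewal transform of spectral measures}, the same law as $D_T$ under the finite volume renewal transform $\mathbf P_{P,T}$ of the spectral measure $\mu_P$ of $H(P)$ with respect to $\Omega$, it suffices to show $\lim_{T\to\infty}\mathbf E_{P,T}[D_T/T] = \rho(P)$. The plan is to compute $\mathbf E_{P,T}[D_T]$ through the rank-one perturbation formula of Proposition \ref{Proposition: Renewal transform under rank-one perturbations}. With $A = H(P)$ and $\psi = \Omega$ (so that $\Omega\in\operatorname{dom}(|H(P)|^{1/2})$ by form-boundedness), we have $\mathbf Z_{\alpha,T} = \mathbf E_{P,T}[e^{-\alpha D_T}] = Z_{\alpha,T}/Z_T$. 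Differentiating at $\alpha=0$ (justified by dominated convergence, since $0\le D_T\le T$) gives
\begin{equation*}
\mathbf E_{P,T}[D_T] = -\partial_\alpha\big|_{\alpha=0}\log Z_{\alpha,T},
\end{equation*}
where $Z_{\alpha,T} = \langle \Omega, e^{-T(H(P)+\alpha|\Omega\rangle\langle\Omega|)}\Omega\rangle$.

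To take $T\to\infty$, the natural route is convexity. The function $\alpha\mapsto g_T(\alpha) \coloneqq -\frac1T\log Z_{\alpha,T}$ is concave, because $\log Z_{\alpha,T} = \log \mathbf E_T[e^{-\alpha D_T}] + \log Z_T$ is a log-moment generating function in $\alpha$ and hence convex. Moreover, $g_T(\alpha)\to E_\alpha(P) \coloneqq \inf\sigma(H(P)+\alpha|\Omega\rangle\langle\Omega|)$ by \eqref{Equation: E from semigroup}. Strictly speaking, \eqref{Equation: E from semigroup} is stated for $H(P)$ itself, but the same Perron--Frobenius argument of Appendix \ref{Appendix: positivity improving} applies to the perturbation for $\alpha\ge0$; alternatively, we can use $Z_{\alpha,T}=\int e^{-Tx}\mu_\alpha(\mathrm dx)$ together with $\inf\operatorname{supp}\mu_\alpha = E_\alpha(P)$, which follows from cyclicity arguments via the Aronszajn--Krein formula. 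Pointwise convergence of concave functions implies convergence of derivatives at every point where the limit is differentiable; more precisely,
\begin{equation*}
\partial^+E_\alpha|_{\alpha=0}\le\liminf_{T\to\infty} g_T'(0)\le\limsup_{T\to\infty} g_T'(0)\le\partial^-E_\alpha|_{\alpha=0}.
\end{equation*}
Since $g_T'(0)=\mathbf E_{P,T}[D_T/T]$, it remains to show that $\alpha\mapsto E_\alpha(P)$ is differentiable at $0$ with derivative $\rho(P)$.

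This last step is the main obstacle. If $\rho(P)>0$, then $E(P)$ is an eigenvalue whose eigenvector $\Psi$ is non-orthogonal to $\Omega$, and the one-sided derivatives can be computed via Aronszajn--Krein: $E_\alpha$ is the lowest solution of $1+\alpha F(z)=0$, where $F(z)=\int(x-z)^{-1}\mu(\mathrm dx)$. Near $z=E(P)$ we have $F(z)\approx \rho(P)/(E(P)-z)+R(z)$ with $R(z)=\int_{(E(P),\infty)}(x-z)^{-1}\mu(\mathrm dx)$. Solving gives $E_\alpha-E(P)=\alpha\rho(P)+o(\alpha)$ from both sides, provided $R$ stays bounded as $z\to E(P)$ from below, i.e. provided $\int_{(E(P),\infty)}(x-E(P))^{-1}\mu(\mathrm dx)<\infty$. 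If this integral diverges, a more careful argument is needed. For $\alpha>0$, the variational bound $E_\alpha\le E(P)+\alpha\rho(P)$ holds using $\Psi$, and a matching lower bound $E_\alpha\ge E(P)+\alpha\rho(P)-o(\alpha)$ should follow from the monotone behaviour of $1/F$; for $\alpha<0$, the bound $E_\alpha\le E(P)+\alpha\langle\Psi,|\Omega\rangle\langle\Omega|\Psi\rangle$ holds, and I would combine it with the concavity of $\alpha\mapsto E_\alpha$ (an infimum of affine functions) to squeeze the derivative. If $\rho(P)=0$, then $\mu(\{E(P)\})=0$, and one shows $E_\alpha-E(P)=o(|\alpha|)$ via $F(z)\to\infty$ or a similar estimate; in particular $\partial^+E_\alpha|_{0}=0$ follows from $0\le E_\alpha - E(P)\le \alpha\langle\phi,|\Omega\rangle\langle\Omega|\phi\rangle+\langle\phi,(H(P)-E(P))\phi\rangle$ with approximate ground states $\phi$ whose overlap with $\Omega$ tends to $0$. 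I expect the two-sided differentiability in the degenerate cases to be the delicate part.
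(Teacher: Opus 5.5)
\textbf{There is a genuine gap: the differentiability step fails in a case the corollary covers.}

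Your first step is exactly the paper's. You use Proposition~\ref{Proposition: Renewal transform of spectral measures} to reduce to $\lim_{T\to\infty}\mathbf E_{P,T}[D_T/T]=\rho(P)$ for the finite volume renewal transform of $\mu=\mu_P$. The paper then simply cites \cite[Theorem 4.5]{HinrichsPolzer.2025}. Your concavity sandwich $\partial^+e(0)\le\liminf_T g_T'(0)\le\limsup_T g_T'(0)\le\partial^-e(0)$ is also correct, with $e(\alpha)\coloneqq\inf\operatorname{supp}\mu_\alpha$. Work with $e(\alpha)$ rather than $\inf\sigma(A_\alpha)$, since $\Omega$ need not be cyclic. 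Use Appendix~\ref{Appendix: positivity improving} only at $\alpha=0$.

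The problem is the claim that $e$ is differentiable at $0$ with derivative $\rho(P)$. This is false whenever $\rho(P)>0$ but $E=E(P)$ is not an isolated point of $\operatorname{supp}\mu$, i.e.\ a ground state sitting at the bottom of the continuous part. Nothing in the corollary excludes this.

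Here is why. For $\alpha>0$ we have $A_\alpha\ge H(P)$, hence $e(\alpha)\ge E$. Suppose $e(\alpha)=E'>E$. On the cyclic subspace of $\Omega$, the operator $H(P)=A_\alpha-\alpha\langle\Omega,\cdot\,\rangle\Omega$ is then a negative rank-one perturbation of an operator bounded below by $E'$. By min--max its spectrum there, which is $\operatorname{supp}\mu$, meets $(-\infty,E')$ in at most one point. This contradicts $E$ being an accumulation point of $\operatorname{supp}\mu$.

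So $e(\alpha)=E$ for all $\alpha\ge0$, giving $\partial^+e(0)=0<\rho(P)=\partial^-e(0)$. The function $e$ has a kink at $0$, and your sandwich only gives $0\le\liminf\le\limsup\le\rho(P)$. In particular, the hoped-for bound $e(\alpha)\ge E+\alpha\rho(P)-o(\alpha)$ for $\alpha>0$ cannot hold. Your Aronszajn--Krein criterion is also the wrong one. What matters for $\alpha>0$ is a gap above $E$ in $\operatorname{supp}\mu$, not finiteness of $\int_{(E,\infty)}(x-E)^{-1}\mu(\mathrm dx)$. For example, $\mu=\rho\delta_E+(1-\rho)\,2(x-E)\1_{[E,E+1]}(x)\,\mathrm dx$ has that integral finite, yet $e\equiv E$ on $[0,\infty)$. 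Your argument does go through when $\rho(P)=0$ or when $E$ is isolated in $\operatorname{supp}\mu$.

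\textbf{How to repair it.} Use the sandwich only for the upper bound, and get the lower bound directly. Write $Z_t=\langle\Omega,e^{-tH(P)}\Omega\rangle$ and $F(z)=\int(x-z)^{-1}\mu(\mathrm dx)$.

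\emph{Upper bound.} The left derivative always equals $\rho(P)$. If $\int(x-E)^{-1}\mu(\mathrm dx)<\infty$, then $\rho(P)=0$ and $e(\alpha)=E$ for small $|\alpha|$. Otherwise, for $\alpha<0$, $e(\alpha)$ is the unique $\lambda<E$ with $F(\lambda)=1/|\alpha|$. With $\delta=E-\lambda$, dominated convergence gives
\[
(E-e(\alpha))/|\alpha|=\delta F(E-\delta)=\int\tfrac{\delta}{x-E+\delta}\,\mu(\mathrm dx)\to\mu(\{E\}).
\]
Hence $\limsup_T\mathbf E_{P,T}[D_T/T]\le\rho(P)$.

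\emph{Lower bound.} By \eqref{Equation: fractions of Z in terms of renewal transform} with $n=2$, $\mathbf E_{P,T}[D_T]=\int_0^T Z_sZ_{T-s}/Z_T\,\mathrm ds$. Since $t\mapsto e^{Et}Z_t$ decreases to $\rho(P)$, the integrand is at least $\rho(P)^2/(e^{ET}Z_T)\to\rho(P)$.

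Together these give a self-contained proof of what the paper imports from \cite{HinrichsPolzer.2025}.
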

	 
	 \begin{proof}
	 	This follows directly from Proposition \ref{Proposition: Renewal transform of spectral measures} and \cite[Theorem 4.5]{HinrichsPolzer.2025}.
	 \end{proof}
	 
	 Our assumption that $|k|^2 \mapsto |v(k)|^2$ is completely monotone and that $|k|^2 \mapsto  \omega(k)$ is a Bernstein function allows us to rewrite $F_T$ in a more convenient form. Let $(X_t)_{t \geq 0}$ be a Brownian motion on $\R^d$. For an interval $[s, t] \in \mathcal I$ we set
	 \begin{equation*}
	 	X_{[s, t]} \coloneqq X_{s, t} \coloneqq X_t - X_s.
	 \end{equation*}
	 Notice that the matrix $\Sigma(\xi)$ is for $\xi \in \mathcal I^\cup$ the covariance matrix of the centered Gaussian vector 
	 \begin{equation*}
	 	X^\xi \coloneqq \big(X_{\xi_i}\big)_{1 \leq i \leq N(\xi)}.
	 \end{equation*}
	 For $u \in [0, \infty)^{N(\xi)}$, we define the Gaussian measure
	 \begin{equation}
	 	\label{Equation: Definition of P(xi, u)}
	 	\mathbf P_{\xi, u}(\mathrm dX)\coloneqq \frac{1}{\phi(\xi, u)}\operatorname{exp}\Big(-\frac{1}{2}\sum_{i=1}^{N(\xi)} u_i^2 |X_{\xi_i}|^2 \Big) \, \mathcal W(\mathrm d X)
	 \end{equation}
	 where $\mathcal W$ is as before the distribution of Brownian motion and where $\phi(\xi, u)$ acts as a normalization constant. We denote the expected value taken with respect to $\mathbf P_{\xi, u}$ by $\mathbf E_{\xi, u}$ and define
	 \begin{equation*}
	 	\sigma^2_T(\xi, u) \coloneqq \frac{1}{d} \mathbf E_{\xi, u}[|X_{0, T}|^2].
	 \end{equation*}
	 We point out that we have the representation \cite[Equation (3.1)]{BetzPolzer.2022},
	\begin{equation}
		\label{Equation: representation of sigma as L2 distance}
		\sigma_T^2(\xi, u) = \operatorname{dist}_{L^2}\big(B_{0, T}, \operatorname{span}\{u_i B_{s_i, t_i} + Z_i:\, 1\leq i \leq N(\xi)\}\big)^2
	\end{equation}
	where $B = (B_t)_{t\geq 0} $ is an one dimensional Brownian motion and $(Z_i)_{i\in \N}$ is an iid family of $\mathcal N(0, 1)$ distributed random variables which is independent of $B$. In particular, we have
	\begin{equation}
		\label{Equation: Trivial estiamtes on sigma}
		\hat D_T(\xi) \leq \sigma_T^2(\xi, u) \leq T.
	\end{equation}
	\begin{prop}
		\label{Proposition: Useful form of F}
		We have for $\xi \neq \emptyset$
	\begin{equation*}
				F_T(\xi, P) = \int_{(0, \infty)^{N(\xi)}} \bigotimes_{i=1}^{N(\xi)} \kappa(|\xi_i|, \mathrm d u_i) \, \phi(\xi, u) e^{-|P|^2 \sigma_T^2(\xi, u)/2}.
	\end{equation*}
	\end{prop}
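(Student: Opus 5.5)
We will prove the identity by writing each weight $|v(k_i)|^2 e^{-|\xi_i|\omega(k_i)}$ as a mixture of Gaussians via \eqref{Equation: Definition of kappatilde}, integrating out $k$ by Fubini, and recognizing the outcome as a Gaussian expectation under the Wiener measure.

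Fix $\xi\neq\emptyset$ with $N=N(\xi)$. By \eqref{Equation: Definition of kappatilde} with $t=|\xi_i|$,
\begin{equation*}
	F_T(\xi,P)=\int_{(0,\infty)^N}\bigotimes_{i=1}^N\tilde\kappa(|\xi_i|,\mathrm d\tilde u_i)\int_{\R^{dN}}\exp\Big(-\tfrac12\Big\langle \begin{pmatrix}k\\P\end{pmatrix},\Sigma_T(\xi)\begin{pmatrix}k\\P\end{pmatrix}\Big\rangle-\tfrac12\sum_{i=1}^N\tilde u_i|k_i|^2\Big)\,\mathrm dk .
\end{equation*}
Fubini is justified because every integrand is non-negative. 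For fixed $\tilde u$, the inner integral is a Gaussian integral in $k$ with positive definite quadratic form $\Sigma(\xi)+\operatorname{diag}(\tilde u_i I_d)$.

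To evaluate the inner integral probabilistically, use that $\Sigma_T(\xi)$ is the covariance matrix of $(X_{\xi_1},\dots,X_{\xi_N},X_{0,T})$. Then
\begin{equation*}
	\exp\big(-\tfrac12\langle (k,P),\Sigma_T(\xi)(k,P)\rangle\big)=\mathbb E\big[e^{\mathrm i(\sum_i k_i\cdot X_{\xi_i}+P\cdot X_{0,T})}\big].
\end{equation*}
Exchanging the $k$-integral with the expectation, which is legitimate by absolute integrability thanks to the factor $e^{-\tilde u_i|k_i|^2/2}$, gives
\begin{equation*}
	\int_{\R^d}e^{\mathrm i k_i\cdot x}e^{-\tilde u_i|k_i|^2/2}\,\mathrm dk_i=(2\pi/\tilde u_i)^{d/2}e^{-|x|^2/(2\tilde u_i)}.
\end{equation*}
Setting $u_i=\tilde u_i^{-1/2}$, the inner integral becomes
\begin{equation*}
	\prod_i(2\pi/\tilde u_i)^{d/2}\;\mathbb E\Big[e^{\mathrm iP\cdot X_{0,T}}\exp\Big(-\tfrac12\sum_i u_i^2|X_{\xi_i}|^2\Big)\Big]=\prod_i(2\pi/\tilde u_i)^{d/2}\,\phi(\xi,u)\,\mathbf E_{\xi,u}\big[e^{\mathrm iP\cdot X_{0,T}}\big],
\end{equation*}
by the definition \eqref{Equation: Definition of P(xi, u)} of $\mathbf P_{\xi,u}$.

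Under $\mathbf P_{\xi,u}$, which is a Wiener measure reweighted by a Gaussian density, $X$ is a centered Gaussian process. By rotation invariance of both $\mathcal W$ and the weight, the covariance of $X_{0,T}$ is $\sigma_T^2(\xi,u) I_d$. Hence
\begin{equation*}
	\mathbf E_{\xi,u}\big[e^{\mathrm iP\cdot X_{0,T}}\big]=e^{-|P|^2\sigma_T^2(\xi,u)/2}.
\end{equation*}
The prefactor $\prod_i(2\pi/\tilde u_i)^{d/2}$ together with the change of variables $\tilde u_i\mapsto u_i=\tilde u_i^{-1/2}$ is exactly how $\kappa(|\xi_i|,\cdot)$ was defined from $\tilde\kappa(|\xi_i|,\cdot)$. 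Rewriting the outer integral in terms of $\bigotimes_i\kappa(|\xi_i|,\mathrm du_i)$ therefore yields the claim.

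The main delicate point is the Fubini/Fourier exchange when $v\notin L^2$. Since $\tilde\kappa$ may be only locally finite, there may be infinite mass near $\tilde u=0$ (large $u$). This is harmless, because the formula is an identity between non-negative quantities computed pointwise in $\tilde u$ and Tonelli applies to the outer integral. The Fourier step at fixed $\tilde u>0$ is absolutely convergent, so no further justification is needed. One should also check that the centeredness and isotropy of $\mathbf P_{\xi,u}$ follow from the symmetry $X\mapsto RX$ for orthogonal $R$.
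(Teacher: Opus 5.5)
Your argument is correct. It differs from the paper's in how the case $v\notin L^2(\R^d)$ is handled.

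The paper first assumes $v\in L^2(\R^d)$. It exchanges the $k$-integral with the Wiener expectation to get $F_T(\xi,P)=\mathbb E_{\mathcal W}[e^{\mathrm iP\cdot X_{0,T}}\prod_i w(|\xi_i|,X_{\xi_i})]$. It then pulls the $\kappa$-integrals out of that expectation and evaluates the remaining Gaussian expectation. The general case follows by truncating the Bernstein measure of $|v|^2$ to obtain $v_\varepsilon\in L^2(\R^d)$ and passing to the limit by monotone convergence.

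You instead insert the Bernstein mixture first, which is justified by Tonelli since everything is non-negative. You then do the Fourier step at fixed $\tilde u\in(0,\infty)^{N(\xi)}$. There the factor $\prod_i e^{-\tilde u_i|k_i|^2/2}$ makes the exchange of the $k$-integral with the Wiener expectation absolutely convergent. Because you evaluate the Wiener expectation in closed form for each fixed $u$, you never interchange the $\kappa$-integrals with $\mathbb E_{\mathcal W}$. That interchange would require $\mathbb E_{\mathcal W}[\prod_i w(|\xi_i|,X_{\xi_i})]<\infty$.

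Your route is therefore shorter and needs no regularization. It also gives the identity in $[0,\infty]$ for every $\xi\neq\emptyset$, not just almost every $\xi$.

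The paper's detour pays off later. It produces the intermediate representation through $w$ and the approximation $v_\varepsilon$, and both are reused in Lemma~\ref{Lemma: F as expected value of product of w} and in the proof of Theorem~\ref{Theorem: Feyman--Kac formula}. In that lemma, finiteness of $F_T(\xi,0)$ for $\Xi_T$-almost every $\xi$ is genuinely needed to pass from $P=0$ to general $P$.
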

	
	\begin{proof}
		Since $\Xi_T(\xi)$ is the covariance matrix of $(X^\xi, X_{0, T})$ we have
		\begin{equation*}
			\operatorname{exp}\Big(- \frac{1}{2} \Big \langle \begin{pmatrix}
				k \\
				P
			\end{pmatrix}, \Sigma_T(\xi)
			\begin{pmatrix}
				k \\
				P
			\end{pmatrix}\Big \rangle \Big)
			= \mathbb E_\mathcal W\Big[ e^{\mathrm i P \cdot X_{0, T}} \prod_{i=1}^{N(\xi)} e^{\mathrm i k_i \cdot X_{\xi_i}} \Big].
		\end{equation*}
		First, let us assume that $v \in L^2(\R^d)$. Then we have by Fubini's theorem
		\begin{equation*}
			F_T(\xi, P) = \mathbb E_\mathcal W\Big[ e^{\mathrm i P \cdot X_{0, T}} \prod_{i=1}^{N(\xi)} \int_{\R^d} e^{\mathrm i k_i \cdot X_{\xi_i}} |v(k_i)|^2 e^{-\omega(k_i)|\xi_i|} \mathrm dk_i \Big].
		\end{equation*}
		Using the definitions of $\kappa, \tilde \kappa$, we have for all $t>0$ and $x\in \R^d$
		\begin{align*}
			\int_{\R^d} e^{\mathrm i k \cdot x} |v(k)|^2 e^{-\omega(k) t} \, \mathrm dk &= \int_{(0, \infty)} \tilde \kappa(t, \mathrm du) \int_{\R^d} \mathrm dk \, e^{\mathrm i k \cdot x} e^{-u |k|^2/2 } \\
			&= \int_{(0, \infty)} \tilde \kappa(t, \mathrm du) \big(2\pi/u\big)^{d/2} e^{-|x|^2/(2u) } 
			= \int_{(0, \infty)} \kappa(t, \mathrm du) \, e^{-u^2|x|^2/2 } 
		\end{align*}
		such that
		\begin{align}
			F_T(\xi, P) &= \mathbb E_{\mathcal W}\Big[ e^{\mathrm i P \cdot X_{0, T}} \prod_{i=1}^{N(\xi)} \int_{(0, \infty)} \kappa(|\xi_i|, \mathrm du_i) \, e^{-u_i^2|X_{\xi_i}|^2/2 } \label{Equation: F as expected value over product of w} \Big] \\
			&=\int_{(0, \infty)^{N(\xi)}}  \bigotimes_{i=1}^{N(\xi)} \kappa(|\xi_i|, \mathrm du_i) \, \mathbb E_\mathcal W\Big[e^{\mathrm i P \cdot X_{0, T}}\prod_{i=1}^{N(\xi)}  e^{-u_i^2|X_{\xi_i}|^2/2 }   \Big]  \nonumber \\
			&= \int_{(0, \infty)^{N(\xi)}} \bigotimes_{i=1}^{N(\xi)} \kappa(|\xi_i|, \mathrm d u_i) \, \phi(\xi, u) e^{-|P|^2 \sigma_T^2(\xi, u)/2} 	\label{Equation: Nice form of F for sqaure integrable v}
		\end{align}
		where the last equality follows from the fact that $\mathbf P_{\xi, u}$ is a rotationally symmetric centered Gaussian measure such that $X_{0, T}$ has distribution $\mathcal N(0, \sigma^2_T(\xi, u)I_d)$ under $\mathbf P_{\xi, u}$. Hence, the statement of Proposition \ref{Proposition: Useful form of F} holds under the additional assumption that $v\in L^2(\R^d)$. Now, let us consider the general case in which $v$ might not be square integrable. By Assumption \ref{Assumption: Assumption 1}, there exists measures $\mu, \nu_t$ such that
		\begin{equation*}
			|v(k)|^2 = \int_{(0, \infty)} \mu(\mathrm du) \, e^{-u |k|^2/2}, \quad e^{-\omega(k)t} = \int_{(0, \infty)} \nu_t(\mathrm du) \, e^{- u |k|^2/2}
		\end{equation*}
		for all $k\in \R^d$ and $t\geq 0$. The kernel $\tilde \kappa$ in \eqref{Equation: Definition of kappatilde} is then given by the convolution
		\begin{equation*}
			\tilde \kappa(t, \, \cdot \,) = \mu* \nu_t.
		\end{equation*}
		We define the function $v_\varepsilon$ by
		\begin{equation}
			\label{Equation: Definition regularized v}
			v_\varepsilon(k) \coloneqq \Big(\int_{(0, \infty)} \mu_\varepsilon(\mathrm du) \, e^{-u |k|^2/2}\Big)^{1/2} \quad \text{ where }  \mu_\varepsilon(\mathrm du) \coloneqq \1_{\{\epsilon < u < 1/\varepsilon \}} \, \mu(\mathrm du)
		\end{equation}
		and define $\tilde \kappa_\varepsilon(t, \, \cdot \,) \coloneqq \mu_\varepsilon * \nu_t$ and $\kappa_\varepsilon(t, \, \cdot \,)$ accordingly such that
		\begin{equation*}
			\int_{(0, \infty)} \kappa_\varepsilon(t, \mathrm du) \, f(u) = \int_{(0, \infty)} \tilde \kappa_\varepsilon(t, \mathrm du) \, (2\pi/u)^{d/2} f(u^{-1/2})
		\end{equation*}
		for all non-negative measurable $f$. 
		Then $v_\varepsilon \in L^2(\R^d)$ for all $\varepsilon>0$ and $|v_\varepsilon|^2 \uparrow |v|^2$ as $\varepsilon\downarrow 0$. By the monotone convergence theorem and \eqref{Equation: Nice form of F for sqaure integrable v}
		\begin{align*}
				F_T(\xi, P) &\coloneqq \lim_{\varepsilon \downarrow 0} \int_{\R^{d N(\xi)}} \operatorname{exp}\Big(- \frac{1}{2} \Big \langle \begin{pmatrix}
					k \\
					P
				\end{pmatrix}, \Sigma_T(\xi)
				\begin{pmatrix}
					k \\
					P
				\end{pmatrix} \Big \rangle \Big)  \prod_{i=1}^{N(\xi)} |v_\varepsilon(k_i)|^2 e^{- |\xi_i| \omega(k_i)} \, \mathrm dk \\
				&= \lim_{\varepsilon \downarrow 0} \int_{(0, \infty)^{N(\xi)}} \bigotimes_{i=1}^{N(\xi)} \kappa_\varepsilon(|\xi_i|, \mathrm d u_i) \, \phi(\xi, u) e^{-|P|^2 \sigma_T^2(\xi, u)/2} \\
				&= \int_{(0, \infty)^{N(\xi)}} \bigotimes_{i=1}^{N(\xi)} \kappa(|\xi_i|, \mathrm d u_i) \, \phi(\xi, u) e^{-|P|^2 \sigma_T^2(\xi, u)/2} . \qedhere
		\end{align*}
	\end{proof}

	Considering the representation of $F_P$ given in Proposition \ref{Proposition: Useful form of F}, it will be convenient to interpret $\widehat{\Xi}_{P, T}$ as the marginal distribution of suitable measure on $\mathcal Y_T \coloneqq \bigcup_{n=0}^\infty \mathcal I_T^n \times (0, \infty)^n$. On that note, we define the measure $\widehat{\Theta}_{P, T}$ on $\mathcal Y_T$ by
	\begin{equation*}
		\widehat{\Theta}_{P, T}(\mathrm d \xi \mathrm du) \coloneqq \frac{e^{T^2/2}}{\langle \Omega, e^{-TH(P)} \Omega \rangle } \Xi_T(\mathrm d\xi) \bigotimes_{i=1}^{N(\xi)} \kappa(|\xi_i|, \mathrm d u_i) \, \phi(\xi, u) e^{-|P|^2 \sigma^2(\xi, u)/2}
	\end{equation*}
	where Propositions \ref{Proposition: Point process representation of semigroup} and \ref{Proposition: Useful form of F} guarantee that $\widehat{\Theta}_{P, T}$ is a probability measure. Notice that Proposition \ref{Proposition: Useful form of F} implies that we indeed recover $\widehat{\Xi}_{P, T}$ as the marginal of $\widehat{\Theta}_{P, T}$ after integrating out $u$.
	\begin{cor}
		\label{Corollary: Point process representation of fraction of partition functions}
		For all $P, \tilde P \in \R^d$
		\begin{equation}
			\label{Equation: Point process representation of fraction of partition functions}
			\frac{\langle \Omega, e^{-TH(\tilde P)} \Omega \rangle}{\langle \Omega, e^{-TH(P)} \Omega \rangle} = 	\mathbb E_{\widehat \Theta_{P, T}}[e^{-(|\tilde P|^2 - |P|^2) \sigma_T^2/2}].
		\end{equation}
	\end{cor}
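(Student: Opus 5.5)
The plan is to combine Proposition \ref{Proposition: Point process representation of semigroup} with Proposition \ref{Proposition: Useful form of F}, and then read off the ratio as an expectation under the probability measure $\widehat\Theta_{P, T}$.

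First I write the numerator as an integral over $\mathcal Y_T$. Proposition \ref{Proposition: Point process representation of semigroup} applied at momentum $\tilde P$ gives
\begin{equation*}
\langle \Omega, e^{-TH(\tilde P)}\Omega\rangle = e^{T^2/2}\int_{\mathcal I_T^\cup} F_T(\xi, \tilde P)\,\Xi_T(\mathrm d\xi).
\end{equation*}
On $\{\xi\neq\emptyset\}$, Proposition \ref{Proposition: Useful form of F} rewrites $F_T(\xi,\tilde P)$ as an integral of $\phi(\xi,u)e^{-|\tilde P|^2\sigma_T^2(\xi,u)/2}$ against $\bigotimes_i\kappa(|\xi_i|,\mathrm du_i)$.

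The key identity is the pointwise factorization
\begin{equation*}
e^{-|\tilde P|^2\sigma_T^2/2}=e^{-(|\tilde P|^2-|P|^2)\sigma_T^2/2}\,e^{-|P|^2\sigma_T^2/2}.
\end{equation*}
Since $\sigma_T^2(\xi,u)$ depends only on $(\xi,u)$ and not on the momentum, this lets me absorb $\phi(\xi,u)e^{-|P|^2\sigma_T^2/2}$ together with the factor $e^{T^2/2}/\langle\Omega,e^{-TH(P)}\Omega\rangle$ into the density of $\widehat\Theta_{P,T}$. Dividing by $\langle\Omega,e^{-TH(P)}\Omega\rangle$ then yields exactly $\mathbb E_{\widehat\Theta_{P,T}}[e^{-(|\tilde P|^2-|P|^2)\sigma_T^2/2}]$.

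For the empty configuration I use the conventions $\sigma_T^2(\emptyset)=T$ (Brownian variance with no constraint) and $\phi(\emptyset)=1$. With these, $F_T(\emptyset,\tilde P)=e^{-T|\tilde P|^2/2}$ factors in the same way, so the $\xi=\emptyset$ term is handled consistently.

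The only real points to check are integrability and finiteness, so that splitting the integrand is legitimate. Everything is nonnegative, so Tonelli applies. The denominator $\langle\Omega,e^{-TH(P)}\Omega\rangle$ is strictly positive, being the Laplace transform of a probability measure. The numerator is finite for the same reason. Hence both sides are finite and the equality holds as an identity in $[0,\infty)$, even when $|\tilde P|<|P|$ and the integrand is unbounded.
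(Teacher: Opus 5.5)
Your proposal is correct and is essentially the paper's argument. You insert Propositions \ref{Proposition: Point process representation of semigroup} and \ref{Proposition: Useful form of F} into the numerator, factor $e^{-|\tilde P|^2\sigma_T^2/2}=e^{-(|\tilde P|^2-|P|^2)\sigma_T^2/2}e^{-|P|^2\sigma_T^2/2}$ pointwise, and recognize the density of $\widehat\Theta_{P,T}$; your conventions $\phi(\emptyset)=1$ and $\sigma_T^2(\emptyset)=T$ just make explicit what the paper leaves implicit. One small correction: the integrand is never unbounded, because $0\le\sigma_T^2\le T$ gives $e^{-(|\tilde P|^2-|P|^2)\sigma_T^2/2}\le \max\{1,e^{(|P|^2-|\tilde P|^2)T/2}\}$, although your nonnegativity/Tonelli argument does not need this bound.
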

	
	\begin{proof}	
		By Propositions \ref{Proposition: Point process representation of semigroup} and \ref{Proposition: Useful form of F} we have 
		\begin{align*}
			\frac{\langle \Omega, e^{-TH(\tilde P)} \Omega \rangle}{\langle \Omega, e^{-TH(P)} \Omega \rangle} &= \frac{\int_{\mathcal I_T^\cup} \Xi_T(\mathrm d\xi)\int_{(0, \infty)^{N(\xi)}} \bigotimes_{i=1}^{N(\xi)} \kappa(|\xi_i|, \mathrm d u_i) \, \phi(\xi, u) e^{-(|\tilde P|^2 - |P|^2) \sigma^2(\xi, u)/2} e^{- |P|^2 \sigma^2(\xi, u)/2} }{\int_{\mathcal I_T^\cup} \Xi_T(\mathrm d\xi)\int_{(0, \infty)^{N(\xi)}} \bigotimes_{i=1}^{N(\xi)} \kappa(|\xi_i|, \mathrm d u_i) \, \phi(\xi, u) e^{- |P|^2 \sigma^2(\xi, u)/2}} \\
			&=\mathbb E_{\hat \Theta_{P, T}}[e^{-(|\tilde P|^2 - |P|^2) \sigma_T^2/2}]. \qedhere
		\end{align*}
		
	\end{proof}

	\section{Proof of Theorem \ref{Theorem: Lower bound on EMR}}
	
	\begin{proof}[Proof of Theorem \ref{Theorem: Lower bound on EMR}]
		We fix some $P_0 \in \R^d$ and define for $P \in \R^d$ with $|P| \geq |P_0|$
		\begin{equation*}
			A(P) \coloneqq H(P_0) + \delta(P, P_0) \langle \Omega, \, \cdot \, \rangle \Omega.
		\end{equation*}
		By \eqref{Equation: Trivial estiamtes on sigma}, Corollary \ref{Corollary: Point process representation of fraction of partition functions} and Proposition \ref{Proposition: Renewal transform under rank-one perturbations}
		\begin{equation*}
			\frac{\langle \Omega, e^{-TH(P)} \Omega \rangle}{\langle \Omega, e^{-TH(P_0)} \Omega \rangle} = 	\mathbb E_{\widehat \Theta_{P_0, T}}\big[e^{-\delta(P, P_0)  \sigma_T^2}\big] \leq \mathbb E_{\widehat\Xi_{P_0, T}}\big[e^{-\delta(P, P_0)  \hat D_T}\big] = \frac{\langle \Omega, e^{-TA(P)} \Omega \rangle}{\langle \Omega, e^{-TH(P_0)} \Omega \rangle}
		\end{equation*}
		such that
		\begin{equation*}
			E(P) = -\lim_{T\to \infty} \frac{1}{T} \log \, \langle \Omega, e^{-TH(P)} \Omega \rangle \geq -\lim_{T\to \infty} \frac{1}{T} \log \, \langle \Omega, e^{-TA(P)} \Omega \rangle = \inf \operatorname{supp} \mu_{P} \eqqcolon f(P)
		\end{equation*}
		where $\mu_{P}$ denotes the spectral measure of $A(P)$ with respect to $\Omega$. From the Sherman--Morrison formula (or alternatively from \eqref{Equation: Sherman–Morrison formula}) one obtains
		\begin{equation*}
			\tilde s(z) \coloneqq \big\langle \Omega, (A(P) - z)^{-1} \Omega \big\rangle  = \frac{s(z)}{1 + \delta(P, P_0)s(z)} \quad \text{ where }s(z) \coloneqq \langle \Omega, (H(P_0) - z)^{-1} \Omega \rangle
		\end{equation*}
		for all $z\in \C \setminus \R$. Notice that $\lambda \mapsto s(\lambda)$ is strictly increasing on the interval $(E(P_0), E(P_0) + \Delta(P_0))$.
		Since finite rank perturbations leave the essential spectrum invariant, all mass of $\mu_{P}$ in $[E(P_0), E_{\operatorname{ess}}(P_0))$ comes from its pure point part. For $f(P) \geq E(P_0) + \Delta(P_0)$ there is nothing to show. So let us assume that $f(P) < E(P_0) + \Delta(P_0)$. Then $f(P)$ is an atom of $\mu_{P}$ and hence the unique solution $\lambda \in \big(E(P_0), E(P_0) + \Delta(P_0)\big)$ to the equation $ \delta(P, P_0)s(\lambda) = -1$.
		Since for all $\lambda \in \big(E(P_0), E(P_0) + \Delta(P_0)\big)$
		\begin{equation*}
			s(\lambda)  = \frac{\rho(P_0)}{E(P_0) - \lambda} + \int_{[E(P_0) + \Delta(P_0), \infty)} \frac{\mathrm \mu_{P_0}(\mathrm dx)}{x - \lambda} \leq \frac{\rho(P_0)}{E(P_0) - \lambda} +  \frac{1-\rho(P_0)}{E(P_0) + \Delta(P_0) - \lambda} ,
		\end{equation*}
		we find that $f(P)$ is no smaller than the unique solution $\lambda \in \big(E(P_0), E(P_0) + \Delta(P_0)\big)$ to the equation
		\begin{equation}
		\label{Equation: Eq for lower bound on E}
			\delta(P, P_0)\Big( \frac{\rho(P_0)}{E(P_0) - \lambda} +  \frac{1-\rho(P_0)}{E(P_0) + \Delta(P_0) - \lambda} \Big) = -1.
		\end{equation}
		Solving \eqref{Equation: Eq for lower bound on E} for $\lambda - E(P_0)$ yields the claim. 
	\end{proof}

	\section{Proof of Theorem \ref{Theorem: Feyman--Kac formula}}
	
	\begin{lemma}
		\label{Lemma: F as expected value of product of w}
		For $\Xi_T$-almost all $\xi$ we have for all $P \in \R^d$
		\begin{equation*}
			F_T(\xi, P) = \mathbb E_\mathcal W\Big[e^{\mathrm i P\cdot X_{0, T}}\prod_{i=1}^{N(\xi)} w(|\xi_i|, X_{\xi_i} )   \Big].
		\end{equation*}
	\end{lemma}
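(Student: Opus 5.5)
Our starting point is the proof of Proposition \ref{Proposition: Useful form of F}. For $v\in L^2(\R^d)$ it already contains the identity \eqref{Equation: F as expected value over product of w}. There the inner integral $\int \kappa(|\xi_i|,\mathrm du_i)\, e^{-u_i^2|X_{\xi_i}|^2/2}$ is by definition exactly $w(|\xi_i|, X_{\xi_i})$, so in the square integrable case the lemma holds for every $\xi$. The plan for general $v$ is to rerun that argument with the regularized $v_\varepsilon$ from \eqref{Equation: Definition regularized v} and pass to the limit $\varepsilon\downarrow 0$. Write $\tilde\kappa_\varepsilon(t,\cdot)=\mu_\varepsilon*\nu_t$, let $\kappa_\varepsilon$ be the corresponding pushforward, and set $w_\varepsilon(t,x)=\int\kappa_\varepsilon(t,\mathrm du)\,e^{-u^2|x|^2/2}$. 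Since $\mu_\varepsilon\uparrow\mu$ setwise, we get $\kappa_\varepsilon(t,\cdot)\uparrow\kappa(t,\cdot)$. Monotone convergence then gives $w_\varepsilon(t,x)\uparrow w(t,x)$ pointwise in $[0,\infty]$.

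We first treat the left-hand side. As in the proof of Proposition \ref{Proposition: Useful form of F}, monotone convergence in $k$ gives $F_T^{(\varepsilon)}(\xi,P)\to F_T(\xi,P)$, where $F_T^{(\varepsilon)}$ is defined with $v_\varepsilon$. The integrand here is a positive Gaussian factor times $\prod|v_\varepsilon(k_i)|^2e^{-|\xi_i|\omega(k_i)}$, and this product is increasing in $\varepsilon\downarrow0$. The $L^2$ case gives
\[
F^{(\varepsilon)}_T(\xi,P)=\mathbb E_\mathcal W\Big[e^{\mathrm iP\cdot X_{0,T}}\prod_{i=1}^{N(\xi)} w_\varepsilon(|\xi_i|,X_{\xi_i})\Big].
\]
The right-hand side contains the oscillating factor $e^{\mathrm i P\cdot X_{0,T}}$, so monotone convergence does not apply directly. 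We instead use dominated convergence with the dominating function $\prod_i w(|\xi_i|,X_{\xi_i})$. It therefore suffices to show that $\mathbb E_\mathcal W[\prod_i w(|\xi_i|,X_{\xi_i})]<\infty$ for $\Xi_T$-almost all $\xi$. Applying monotone convergence to this $P=0$ expression gives $\mathbb E_\mathcal W[\prod_i w(|\xi_i|, X_{\xi_i})]=\lim_\varepsilon F_T^{(\varepsilon)}(\xi,0)=F_T(\xi,0)$.

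The main obstacle is therefore to show that $F_T(\xi,0)<\infty$ for $\Xi_T$-almost every $\xi$. For this we would use Proposition \ref{Proposition: Point process representation of semigroup} at $P=0$, which gives $\int F_T(\xi,0)\,\Xi_T(\mathrm d\xi)=e^{-T^2/2}\langle\Omega,e^{-TH(0)}\Omega\rangle<\infty$. The right-hand side is finite because $H(0)$ is bounded below. A nonnegative function with finite integral is finite almost everywhere, so $F_T(\cdot,0)<\infty$ $\Xi_T$-almost surely. The null set does not depend on $P$, since the domination $|e^{\mathrm iP\cdot X_{0,T}}\prod_i w_\varepsilon|\le\prod_i w$ is uniform in $P$. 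Hence for every such $\xi$ and all $P$ simultaneously, dominated convergence gives
\[
F_T(\xi,P)=\lim_{\varepsilon\downarrow0}F_T^{(\varepsilon)}(\xi,P)=\mathbb E_\mathcal W\Big[e^{\mathrm iP\cdot X_{0,T}}\prod_{i=1}^{N(\xi)} w(|\xi_i|,X_{\xi_i})\Big].
\]
Two details remain to be checked. First, the monotone limit in $k$ for general $P$ is justified: the integrand in $k$ is nonnegative for every real $P$, since the quadratic form is real. Second, the case $\xi=\emptyset$ is trivial.
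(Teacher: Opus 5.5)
Your proposal is correct and follows essentially the same route as the paper: regularize by $v_\varepsilon$, use the square-integrable case together with monotone convergence to write $F_T(\xi,P)$ as the limit of $\mathbb E_{\mathcal W}\big[e^{\mathrm i P\cdot X_{0,T}}\prod_i w_\varepsilon(|\xi_i|,X_{\xi_i})\big]$, and get the $\Xi_T$-a.e.\ finiteness of $F_T(\xi,0)=\mathbb E_{\mathcal W}\big[\prod_i w(|\xi_i|,X_{\xi_i})\big]$ from Proposition~\ref{Proposition: Point process representation of semigroup} at $P=0$. You then conclude by dominated convergence for all $P$ at once, as the paper does, and the extra details you spell out (monotonicity $\kappa_\varepsilon\uparrow\kappa$, $P$-independence of the null set, the case $\xi=\emptyset$) are all fine.
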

	
	\begin{proof}
		Under the additional assumption $v\in L^2(\R^d)$ we have already shown the statement in \eqref{Equation: F as expected value over product of w}. For general $v$, we approximate $v$ as in the proof of Proposition \ref{Proposition: Useful form of F} by $v_\varepsilon$ and define $w_\varepsilon$ and $F_{T, \varepsilon}$ accordingly. We obtain with the monotone convergence theorem
		\begin{equation*}
			F_T(\xi, P) =  \lim_{\varepsilon\downarrow 0} F_{T, \varepsilon}(\xi, P) = \lim_{\varepsilon\downarrow 0} \mathbb E_\mathcal W\Big[e^{\mathrm i P\cdot X_{0, T}}\prod_{i=1}^{N(\xi)} w_\varepsilon(|\xi_i|, X_{\xi_i} )   \Big].
		\end{equation*}
		For $P=0$, another application of the monotone convergence theorem yields
		\begin{equation*}
			F_T(\xi, 0) = \mathbb E_\mathcal W\Big[\prod_{i=1}^{N(\xi)} w(|\xi_i|, X_{\xi_i} )   \Big].
		\end{equation*}
		By Proposition \ref{Proposition: Point process representation of semigroup}, we have $F_T(\xi, 0)< \infty$ for $\Xi_T$ almost all $\xi$, such that we obtain for those $\xi$
		\begin{equation*}
			F_T(\xi, P) = \lim_{\varepsilon\downarrow 0} \mathbb E_\mathcal W\Big[e^{\mathrm i P\cdot X_{0, T}}\prod_{i=1}^{N(\xi)} w_\varepsilon(|\xi_i|, X_{\xi_i} )   \Big] = \mathbb E_\mathcal W\Big[e^{\mathrm i P\cdot X_{0, T}}\prod_{i=1}^{N(\xi)} w(|\xi_i|, X_{\xi_i} )   \Big]
		\end{equation*}
		for all $P\in \R^d$ by the dominated convergence theorem.
	\end{proof}

	\begin{proof}[Proof of Theorem \ref{Theorem: Feyman--Kac formula}]
		We have by Proposition \ref{Proposition: Point process representation of semigroup} and Lemma \ref{Lemma: F as expected value of product of w}
		\begin{align*}
			\big \langle \Omega, e^{-TH(P)} \Omega \big \rangle &= e^{T^2/2 } \int_{\mathcal I_T^\cup } \Xi_T(\mathrm d\xi) \, F_T(\xi, P) \\
			&= \sum_{n=0}^\infty \frac{1}{n!} \int_{\mathcal I_T^n} \mathrm d \xi \, \mathbb E_\mathcal W\Big[e^{\mathrm i P\cdot X_{0, T}}\prod_{i=1}^{N(\xi)} w(|\xi_i|, X_{\xi_i} )   \Big] \\
			&= \mathbb E_{\mathcal W}\bigg[ e^{\mathrm i P\cdot X_{0, T}} \sum_{n=0}^\infty \frac{1}{n!} \Big(\frac{1}{2}\int_{[0, T]^2} w(|t-s|, X_{s, t}) \, \mathrm ds \mathrm dt \Big)^n \bigg] \\
			&= \mathbb E_{\mathcal W}\bigg[e^{\mathrm i P\cdot X_{0, T}} \exp\Big(\frac{1}{2}\int_{[0, T]^2} w(|t-s|, X_{s, t}) \, \mathrm ds \mathrm dt   \Big)  \bigg]
		\end{align*}
		where we might interchange the order of integration for $P=0$ by non-negativity of the integrand and then for general $P$ by the dominated convergence theorem.
	\end{proof}
	
	\section{Proof of Theorem \ref{Theorem: Convergence of means square displacement}}

	\begin{prop}
		\label{Proposition: Point process representation of mean square displacement}
		We have for all $T>0$ and $P\in \R^d$
		\begin{equation*}
			\hat \sigma_T^2(P)
			 =  \frac{1}{T}\mathbb E_{\widehat{\Theta}_{P, T}}[\sigma^2_T].
		\end{equation*}
	\end{prop}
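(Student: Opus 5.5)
The plan is to differentiate the identity of Corollary \ref{Corollary: Point process representation of fraction of partition functions} with respect to $|\tilde P|^2$ at $\tilde P = P$. We then compare the result with the same derivative computed via the Feynman--Kac representation \eqref{Equation: fraction of diagonal elements is char fct}.

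\emph{Step 1 (the two representations).} Fix $P\neq 0$ and set $e \coloneqq P/|P|$. For $r\ge 0$ consider $\tilde P = re$ and $s=r^2$. Since $\widehat{\mathbb P}_T$ is invariant under $X\mapsto -X$, its characteristic function is real, and \eqref{Equation: fraction of diagonal elements is char fct} gives
\begin{equation*}
g(r)\coloneqq \frac{\langle \Omega, e^{-TH(re)}\Omega\rangle}{\langle \Omega, e^{-TH(0)}\Omega\rangle} = \widehat{\mathbb E}_T[\cos(r\, e\cdot X_{0,T})].
\end{equation*}
On the other hand, Corollary \ref{Corollary: Point process representation of fraction of partition functions}, applied with base point $P$ and then multiplied by $g(|P|)$, yields
\begin{equation*}
g(r) = g(|P|)\,\mathbb E_{\widehat\Theta_{P,T}}\big[e^{-(r^2-|P|^2)\sigma_T^2/2}\big].
\end{equation*}
Note that $g(|P|)>0$, since it is a ratio of the positive quantities $\langle\Omega,e^{-TH(\cdot)}\Omega\rangle$.

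\emph{Step 2 (the Poisson side).} By \eqref{Equation: Trivial estiamtes on sigma} we have $0\le\sigma_T^2\le T$. Hence the right-hand side is an entire function of $s=r^2$, and by dominated convergence its derivative in $s$ at $s=|P|^2$ equals $-\tfrac12 g(|P|)\,\mathbb E_{\widehat\Theta_{P,T}}[\sigma_T^2]$.

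\emph{Step 3 (the path side).} To differentiate under $\widehat{\mathbb E}_T$ we need $\widehat{\mathbb E}_T[(e\cdot X_{0,T})^2]<\infty$; this is the main technical point. I would obtain it by taking $P=0$ in Corollary \ref{Corollary: Point process representation of fraction of partition functions}. Again because $\sigma_T^2\le T$, the characteristic function $r\mapsto \widehat{\mathbb E}_T[e^{\mathrm i r e\cdot X_{0,T}}]=\mathbb E_{\widehat\Theta_{0,T}}[e^{-r^2\sigma_T^2/2}]$ extends to an entire function of $r$. A characteristic function that is twice differentiable at $0$ implies a finite second moment; in fact here all exponential moments are finite.

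With the second moment in hand, use $\frac{\mathrm d}{\mathrm dr}\cos(ry)=-r y^2\operatorname{sinc}(ry)$ together with $|y^2\operatorname{sinc}(ry)|\le y^2$. Dominated convergence then gives
\begin{equation*}
\frac{\mathrm d g}{\mathrm d(r^2)}\Big|_{r=|P|} = -\tfrac12\,\widehat{\mathbb E}_T\big[(e\cdot X_{0,T})^2\operatorname{sinc}(P\cdot X_{0,T})\big].
\end{equation*}

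\emph{Step 4 (conclusion).} Equating the two derivatives from Steps 2 and 3 and dividing by $-\tfrac12\,T\,g(|P|)$, where $g(|P|)=\widehat{\mathbb E}_T[\cos(P\cdot X_{0,T})]$, gives exactly $\hat\sigma_T^2(P)=\frac1T\mathbb E_{\widehat\Theta_{P,T}}[\sigma_T^2]$.

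For $P=0$ the same computation applies with an arbitrary unit vector $e$. Here $\operatorname{sinc}=\cos=1$ at $P=0$ and $g(0)=1$, and rotational invariance of $\widehat{\mathbb P}_T$ gives $\widehat{\mathbb E}_T[(e\cdot X_{0,T})^2]=\frac1d\widehat{\mathbb E}_T[|X_{0,T}|^2]$. This matches the definition of $\hat\sigma_T^2(0)$.
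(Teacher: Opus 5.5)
Your proof is correct, but it takes a different route from the paper's.

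\textbf{The paper's route.} The paper argues component by component. For fixed $(\xi,u)$ it uses the Gaussian identity $e^{-|P|^2\sigma_T^2(\xi,u)/2}\sigma_T^2(\xi,u)=\mathbf E_{\xi,u}[f(P,X_{0,T})]$, where $f(P,x)=(P\cdot x)^2|P|^{-2}\operatorname{sinc}(P\cdot x)$; this comes from differentiating the Gaussian characteristic function in $|P|$. It then integrates this identity against $\Xi_T$ and the measures $\kappa(|\xi_i|,\cdot)$, replaces the $u$-integrals by $w$, and resums the Poisson series into the exponential, exactly as in the proof of Theorem \ref{Theorem: Feyman--Kac formula}. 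The interchanges are justified by non-negativity at $P=0$ and dominated convergence for $P\neq 0$. The Feynman--Kac formula then identifies the denominator.

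\textbf{Your route.} You never go back inside the Poisson/Gaussian mixture. Instead you differentiate, in $|\tilde P|^2$, two identities that are already established: Corollary \ref{Corollary: Point process representation of fraction of partition functions} and the characteristic-function identity \eqref{Equation: fraction of diagonal elements is char fct}. The only interchanges you need are between a derivative and two expectations, so the resummation argument is not repeated.

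\textbf{What each approach costs.} You must know in advance that $\widehat{\mathbb E}_T[|X_{0,T}|^2]<\infty$. You obtain this correctly: since $0\le\sigma_T^2\le T$, the characteristic function $\mathbb E_{\widehat\Theta_{0,T}}[e^{-r^2\sigma_T^2/2}]$ is entire in $r$. A finite second derivative at $0$ then forces a finite second moment, for instance via Fatou applied to $2(1-\cos(hY))/h^2$. In the paper, this finiteness comes for free from the $P=0$ step, where the integrand $(v\cdot x)^2$ is non-negative and Tonelli applies. That same step supplies the dominating function for $P\neq 0$.

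Your Step 3 in fact shows that the law of $X_{0,T}$ under $\widehat{\mathbb P}_T$ is the $\widehat\Theta_{0,T}$-mixture of $\mathcal N(0,\sigma_T^2 I_d)$. This is the structural fact underlying both proofs; the paper uses it pointwise, you use it through its transform.

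Your handling of $P=0$ is also fine. There the derivative in $r^2$ is one-sided at the boundary, which is harmless, or you can use a second-order Taylor expansion of $\cos$. You then finish with rotational invariance of $\widehat{\mathbb P}_T$, which holds because $w(t,\cdot)$ is radial.
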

	
	\begin{proof}
		We start by noticing that
		\begin{align*}
			e^{-|P|^2 \sigma_T^2(\xi, u)/2} \sigma_T^2(\xi, u)  &= -\frac{1}{|P|} \partial_{|P|} \mathbf E_{\xi, u} \big[ e^{\mathrm i P\cdot X_{0, T}} \big] \\
			&= -\frac{\mathrm i}{|P|}\mathbf E_{\xi, u} \Big[ \tfrac{P \cdot X_{0, T}}{|P|} e^{\mathrm i P\cdot X_{0, T}} \Big] = \frac{1}{|P|^2} \mathbf E_{\xi, u} \Big[ (P \cdot X_{0, T})^2 \operatorname{sinc}(P\cdot X_{0, T}) \Big]
		\end{align*}
		for all $P\in \R^d \setminus \{0\}$ such that
		\begin{equation}
			\label{Equation: integrand in terms of Gaussian}
			e^{-|P|^2 \sigma_T^2(\xi, u)/2} \sigma_T^2(\xi, u) = \mathbf E_{\xi, u}\big[f(P, X_{0, T}) \big]
		\end{equation}
		for all $P\in \R^d$ where
		\begin{align*}
			f(P, x) \coloneqq \begin{cases}
				\frac{(P \cdot x)^2}{|P|^2} \operatorname{sinc}(P\cdot x) \quad &\text{ for }P\neq 0, \, x\in \R^d \\
				(v \cdot x)^2 \quad &\text{ for }P= 0,\, x\in \R^d		
			\end{cases}
		\end{align*}
		for some arbitrary unit vector $v \in \R^d$.
		We now proceed similarly to the proof of Theorem \ref{Theorem: Feyman--Kac formula}: we have
		\begin{align*}
			\mathbb E_{\widehat{\Theta}_{P, T}}[\sigma^2_T] &= \frac{e^{T^2/2}}{\langle \Omega, e^{-TH(P)} \Omega \rangle}\int_{\mathcal I_T^\cup} \Xi_T(\mathrm d \xi) \, \int_{(0, \infty)^{N(\xi)}} \bigotimes_{i=1}^{N(\xi)} \kappa(|\xi_i|, \mathrm du_i) \, \phi(\xi, u) e^{-|P|^2 \sigma_T^2(\xi, u)/2} \sigma_T^2(\xi, u) \\
			&= \frac{e^{T^2/2}}{\langle \Omega, e^{-TH(P)} \Omega \rangle} \int_{\mathcal I_T^\cup} \Xi_T(\mathrm d \xi) \, \int_{(0, \infty)^{N(\xi)}} \bigotimes_{i=1}^{N(\xi)} \kappa(|\xi_i|, \mathrm du_i) \, \phi(\xi, u) \mathbf E_{\xi, u} \big[ f(P, X_{0, T})\big] \\
			&= \frac{e^{T^2/2}}{\langle \Omega, e^{-TH(P)} \Omega \rangle} \int_{\mathcal I_T^\cup} \Xi_T(\mathrm d \xi) \, \int_{(0, \infty)^{N(\xi)}} \bigotimes_{i=1}^{N(\xi)} \kappa(|\xi_i|, \mathrm du_i)  \mathbb E_{\mathcal W} \Big[ f(P, X_{0, T}) \prod_{i=1}^{N(\xi)} e^{-u_i^2 |X_{\xi_i}|^2/2  }   \Big] \\
			&= \frac{e^{T^2/2}}{\langle \Omega, e^{-TH(P)} \Omega \rangle} \int_{\mathcal I_T^\cup} \Xi_T(\mathrm d \xi) \,   \mathbb E_{\mathcal W} \Big[f(P, X_{0, T}) \prod_{i=1}^{N(\xi)} w(|\xi_i|, X_{\xi_i})   \Big] \\
			&= \frac{1}{\langle \Omega, e^{-TH(P)} \Omega \rangle} \mathbb E_\mathcal W\bigg[f(P, X_{0, T})\exp\Big(\frac{1}{2}\int_{[0, T]^2} w(|t-s|, X_{s, t}) \, \mathrm ds \mathrm dt   \Big)  \bigg]
		\end{align*}
		where we might exchange the order of integration for $P=0$ by non-negativity of the integrand and then for $P \neq 0$ by the dominated convergence theorem. The statement follows from the representation of $\langle \Omega, e^{-TH(P)} \Omega \rangle$ given by Theorem \ref{Theorem: Feyman--Kac formula}.
	\end{proof}

	\begin{proof}[Proof of Theorem \ref{Theorem: Convergence of means square displacement}]
		We write with \eqref{Equation: E from semigroup} and Propositions \ref{Proposition: Point process representation of semigroup} and \ref{Proposition: Useful form of F}
		\begin{equation*}
			E(P) = \lim_{T\to \infty} - \frac{1}{T} \log \langle \Omega, e^{-TH(P)} \Omega \rangle = \lim_{T\to \infty} \phi_T(P^2)
		\end{equation*}
		where
		\begin{equation*}
			\phi_T(\alpha) \coloneqq  - \frac{1}{T}\log\Big( e^{T^2/2 } \int_{\mathcal I_T^\cup } \Xi_T(\mathrm d\xi) \, \int_{(0, \infty)^{N(\xi)}} \bigotimes_{i=1}^{N(\xi)} \kappa(|\xi_i|, \mathrm d u_i) \, \phi(\xi, u) e^{-\alpha \sigma_T^2(\xi, u)/2}  \Big).
		\end{equation*}
		By convexity of cumulant generating functions, $\phi_T$ is concave for every $T>0$. Hence, $\mathcal E$ is, as the pointwise limit of concave functions, concave and we have with Proposition \ref{Proposition: Point process representation of mean square displacement}
		\begin{equation*}
			(\partial^+ \mathcal E)(|P|^2) \leq \liminf_{T\to \infty} 	\phi_T'(|P|^2) = \tfrac{1}{2} \liminf_{T\to \infty} \mathbb E_{\widehat{\Theta}_{P, T}}[\sigma^2_T/T] = \tfrac{1}{2}\liminf_{T\to \infty}	\hat \sigma_T^2(P)
		\end{equation*}
		for all $P\in \R^d$ as well as
		\begin{equation*}
			(\partial^-  \mathcal E)(|P|^2) \geq \limsup_{T\to \infty} 	\phi_T'(|P|^2) = \tfrac{1}{2} \limsup_{T\to \infty} \mathbb E_{\widehat{\Theta}_{P, T}}[\sigma^2_T/T] = \tfrac{1}{2} \limsup_{T\to \infty}	\hat \sigma_T^2(P)
		\end{equation*}
		for all $P\in \R^d \setminus \{0\}$.
		Using Proposition \ref{Proposition: Point process representation of mean square displacement}, Corollary \ref{Corollary: Point process representation of vacuum overlap} and \eqref{Equation: Trivial estiamtes on sigma}, we obtain
		\begin{equation*}
			\liminf_{T\to \infty} \hat \sigma^2_T(P) = \liminf_{T\to \infty}  \mathbb E_{\widehat{\Theta}_{P, T}}[\sigma^2_T/T] \geq \lim_{T\to \infty} \mathbb E_{\widehat{\Xi}_{P,T}} [\hat D_T/T] = \rho(P).
		\end{equation*}
		as well as 
		\begin{equation*}
			\limsup_{T\to \infty} \hat \sigma^2_T(P) = \limsup_{T\to \infty}  \mathbb E_{\widehat{\Theta}_{P, T}}[\sigma^2_T/T] \leq 1.
		\end{equation*}
		Now, let us assume that $\rho$ is right-continuous at $P$ and $\rho(P)>0$.
		Let $\varepsilon>0$ and define
		\begin{equation*}
			P_T \coloneqq \begin{cases}
				\sqrt{1 + \frac{2\varepsilon}{|P|^2 T}} P &\text{ if } P \neq 0 \\
				\sqrt{\frac{2\varepsilon}{T}} v	&\text{ if } P = 0
			\end{cases}.
		\end{equation*}
		for $T>0$, where $v\in \R^d$ is an arbitrary unit vector.
		Then
		\begin{equation*}
			\mathcal E(|P|^2 + 2\varepsilon/T) - \mathcal E(|P|^2) = E(P_T) - E(P).
		\end{equation*}
		We write with Corollary \ref{Corollary: Point process representation of fraction of partition functions}
		\begin{equation*}
			\mathbb E_{\widehat{\Theta}_{P, T}}[e^{-\varepsilon \sigma_T^2/T}] = e^{-T(E(P_T) - E(P))} f(T) \quad \text{ where } f(T) \coloneqq \frac{\langle \Omega, e^{-T(H(P_T)-E(P_T)) } \Omega \rangle }{\langle \Omega, e^{-T(H(P)-E(P))} \Omega \rangle}
		\end{equation*}
		such that
		\begin{equation}
			\label{Equation: expression for difference quotient}
			2\varepsilon\cdot \frac{\mathcal E(|P|^2 + 2\varepsilon/T) - \mathcal E(|P|^2) }{2\varepsilon/T} = \log f(T) - \log \mathbb E_{\widehat{\Theta}_{P, T}}[e^{-\varepsilon \sigma_T^2/T}].
		\end{equation}
		By assumption
		\begin{equation*}
			\limsup_{T\to \infty} f(T) \geq \limsup_{T\to \infty} \frac{\rho(P_T)}{\rho(P) + o(T)} = 1
		\end{equation*}
		and since $\sigma^2_T/T \leq 1$ we have
		\begin{equation*}
			-\log \mathbb E_{\widehat{\Theta}_{P, T}}[e^{-\varepsilon \sigma_T^2/T}] = -\log\big(1 - \varepsilon \cdot \mathbb E_{\widehat{\Theta}_{P, T}}[\sigma_T^2/T] + \mathcal O(\varepsilon^2) \big) =  \varepsilon \cdot \hat \sigma^2_T(P) + \mathcal O(\varepsilon^2)
		\end{equation*}
		uniformly in $T$. Hence, taking the limit $T\to \infty$ in \eqref{Equation: expression for difference quotient} yields
		\begin{equation*}
			2 \varepsilon \cdot (\partial^+ \mathcal E)(|P|^2) \geq \varepsilon \limsup_{T\to \infty} \hat \sigma^2_T(P) + \mathcal O(\varepsilon^2).
		\end{equation*}
		Hence $2 \cdot (\partial^+ \mathcal E)(|P|^2) \geq \limsup_{T\to \infty} \hat \sigma^2_T(P)$ which concludes the proof.
	\end{proof}

	\begin{appendices}
	\section{Positivity improvement of the semigroup}
	Identifying $(L^2(\R^{d}))^{\otimes_s n}$ with $	L^2_{\text{sym}}(\R^{dn})$, we call $\psi = \oplus_{n} \psi_n \in \mathcal F(L^2(\R^d))$ non-negative (strictly positive) if $\psi_n \geq 0$ ($\psi_n > 0$) holds a.e.\ for all $n\in \N$. A linear operator $A$ on $\mathcal F$ is called positivity preserving (positivity improving), if $A\psi$ is non-negative (strictly positive) for all non-negative $\psi \in \operatorname{dom}(A)\setminus \{0\}$. If the semigroup generated by a lower bounded, self-adjoint operator $A$ on $\mathcal F$ is positivity improving, i.e.\ if $e^{-tA}$ is positivity improving for all $t>0$, then
	\begin{equation*}
		E_A \coloneqq \inf \sigma(A) = -\lim_{t\to \infty} \frac{1}{t} \log \langle \psi, e^{-tH(P)} \psi \rangle
	\end{equation*}
	holds for any non-negative $\psi\in \mathcal F \setminus \{0\}$ \cite[Theorem 2.2]{KellerLenzVogtWojciechowski.2015}. If $E_A$ is additionally an eigenvalue of $A$, then the corresponding eigenspace is one-dimensional and there exists a strictly positive eigenvector $\phi$, see e.g.\ \cite[Theorem 2.12]{Miyao.2010}. In particular, we then have $\langle \Omega, \phi \rangle >0$. 
	\label{Appendix: positivity improving}
	 \begin{lemma}
	 	\label{Lemma: consequences from positivity improving semigroup}
	 	The following holds for all $P\in \R^d$.
	 	\begin{enumerate}
	 		\item $E(P)$ is an eigenvalue of $H(P)$ if and only if $\rho(P)>0$.
	 		\item We have 
	 		\begin{equation*}
	 			E(P) = -\lim_{t\to \infty} \frac{1}{t} \log \langle \Omega, e^{-tH(P)} \Omega \rangle.
	 		\end{equation*}
	 	\end{enumerate}
	 \end{lemma}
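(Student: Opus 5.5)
The plan is to show that $e^{-tH(P)}$ is positivity improving for every $t>0$ and every $P\in\R^d$, and then read off both items from the general facts recalled at the start of this appendix. Given positivity improvement, item (2) is \cite[Theorem 2.2]{KellerLenzVogtWojciechowski.2015} applied with the non-negative vector $\psi=\Omega$. For item (1), one direction is trivial: if $\rho(P)>0$, then $\1_{\{E(P)\}}(H(P))\neq 0$, so $E(P)$ is an eigenvalue. Conversely, if $E(P)$ is an eigenvalue, \cite[Theorem 2.12]{Miyao.2010} provides a one-dimensional eigenspace spanned by a strictly positive $\phi$. Then $\rho(P)=|\langle\Omega,\phi\rangle|^2/\|\phi\|^2>0$, since the zero-particle component of $\phi$ is a strictly positive number.

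For the positivity improvement I first reduce to $v\geq 0$. The coupling enters the Dyson expansion \eqref{Equation: Dyson-type expansion of semigroup} only through $|v|^2$. More to the point, the unitary $\Gamma(u)$ with $u$ the multiplication by the phase of $v$ commutes with $P_{\mathrm f}$ and $\mathrm d\Gamma(\omega)$ and maps $\phi(v)$ to $\phi(|v|)$. So I may assume $v=|v|$. Moreover, $|k|^2\mapsto|v(k)|^2$ is completely monotone and not identically zero, hence a Laplace transform of a nonzero measure, so $v>0$ everywhere.

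Next I treat the regularized couplings $v_\varepsilon\in L^2(\R^d)$ from \eqref{Equation: Definition regularized v}, which are again strictly positive with $v_\varepsilon\uparrow v$. For $v_\varepsilon$, $\phi(v_\varepsilon)$ is relatively bounded and the standard Dyson series holds:
\[
e^{-tH_\varepsilon(P)}=\sum_n\int_{\triangle_{n,t}} e^{-\sigma_0H_0}\big(-\phi(v_\varepsilon)\big)e^{-\sigma_1H_0}\cdots\big(-\phi(v_\varepsilon)\big)e^{-\sigma_nH_0}\,\mathrm d\sigma ,
\]
with $H_0=\frac12|P-P_{\mathrm f}|^2+\mathrm d\Gamma(\omega)$. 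Each factor $e^{-\sigma H_0}$ acts on each $n$-particle sector as multiplication by a strictly positive function, and $-\phi(v_\varepsilon)=a^*(v_\varepsilon)+a(v_\varepsilon)$ has positive kernels in the sign convention where $H=H_0-\phi$. (If the sign of $\phi$ is $+$, I conjugate by $(-1)^{N}$, which flips $\phi\mapsto-\phi$ and preserves positivity notions.) So every term is positivity preserving.

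To get improvement, take nonnegative nonzero $\psi,\chi$ with nonzero components in sectors $n$ and $m$. The term with $n$ annihilations followed by $m$ creations gives $\langle\chi,\,\cdot\,\psi\rangle>0$. Indeed, $a(v_\varepsilon)^n\psi_n=\int v_\varepsilon^{\otimes n}\psi_n$ multiplied by positive weights is strictly positive, and $a^*(v_\varepsilon)^m$ applied to the vacuum gives the strictly positive function $\prod v_\varepsilon$. This integrand is continuous in $\sigma$, so its integral is positive. All other terms are nonnegative, so $e^{-tH_\varepsilon(P)}$ is positivity improving.

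Finally I pass to $v$, which I expect to be the main obstacle, since $\phi(v)$ is only form-bounded and the naive Dyson series need not converge. Termwise in the Dyson series, all kernels are monotone in $v_\varepsilon$. This should give the operator inequality, in the positivity order,
\[
e^{-tH_{\varepsilon'}(P)}\geq e^{-tH_\varepsilon(P)}\quad\text{for }\varepsilon'<\varepsilon .
\]
Combined with strong convergence $e^{-tH_{\varepsilon'}(P)}\to e^{-tH(P)}$, which I would take from norm-resolvent convergence of the forms (infinitesimal form-boundedness uniform in $\varepsilon$, as in \cite{Desio.Seiringer.2025}), this yields $\langle\chi,e^{-tH(P)}\psi\rangle\geq\langle\chi,e^{-tH_\varepsilon(P)}\psi\rangle>0$ for all nonnegative nonzero $\psi,\chi$. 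That is positivity improvement. If the monotonicity is awkward to justify via Dyson series for unbounded $\phi$, I would instead use Trotter's product formula: $e^{s\phi(v_\varepsilon)}$ is monotone in $v_\varepsilon$ in the positivity order because its exponential series has nonnegative terms increasing in $v_\varepsilon$.
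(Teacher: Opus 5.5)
Your proposal is correct and follows essentially the same route as the paper: conjugate by a unitary fixing $\Omega$ to make the coupling sign-definite, show positivity improvement for $L^2$-regularized couplings, and pass to the limit using monotonicity in the positivity order together with strong resolvent convergence. The paper does the same, but uses $v_n=e^{-|k|/n}v$ instead of your $v_\varepsilon$ and packages the regularized and limiting steps by citing \cite[Proposition 6.8]{Miyao.2010} and \cite[Theorem 2.4]{Miyao.2013}, before reading off both items from \cite{KellerLenzVogtWojciechowski.2015} and \cite{Miyao.2010}. One small correction: conjugating by $\Gamma(-1)$, i.e.\ your $(-1)^{N}$, does not preserve the positive cone, so it does not carry positivity improvement back to $H(P)$; the step is justified only because $\Gamma(-1)\Omega=\Omega$, which leaves $E(P)$, $\rho(P)$ and $\langle\Omega,e^{-tH(P)}\Omega\rangle$ unchanged (alternatively, choose the phase so that $v\mapsto-|v|$ directly, as the paper does).
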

	 \begin{proof}
	 While this follows similar to the proof for the Fröhlich polaron given in \cite{Miyao.2010}, we will sketch the proof for the convenience of the reader. It is sufficient to show that there exists a unitary map $U$ on $\mathcal F$ such that $U\Omega = \Omega$ and such that the semigroup generated by $U H(P) U^*$ is positivity improving. Let $\varphi:\R^d \to [0, 2\pi]$ be such that 
	 \begin{equation*}
	 	\forall k\in \R^d:\, e^{\mathrm i \varphi(k)}v(k) = -|v(k)|.
	 \end{equation*}
	 Let $V:L^2(\R^d) \to L^2(\R^d)$ be the unitary map $\psi \mapsto e^{\mathrm i \varphi} \psi$ and let $U = \Gamma(V)$. Then 
	 \begin{equation*}
	 	U H(P) U^* =  \hat H(P) \quad \text{ where }\quad  \hat H(P) = \frac12 |P - P_f|^2 + \mathrm d\Gamma(\omega) + \phi(-|v|).
	 \end{equation*}
	 We hence might assume with out loss of generality that $v\leq 0$.
	 For $n\in \mathbb N$, we define the regularized Hamiltonian $ H_n(P)$ by
	 	\begin{equation*}
	 		 H_n(P) = \frac12 |P - P_f|^2 + \mathrm d\Gamma(\omega) + \phi(v_n), \quad \text{ where }v_n(k) \coloneqq e^{-|k|/n}v(k).
	 	\end{equation*}
	 	Notice that our assumptions imply $v_n \in L^2(\R^d)$ and $v_n(k)<0$ for all $k\in \R^d \setminus\{0\}$ and $n\in \N$.
	 	Combining \cite[Appendix A]{Desio.Seiringer.2025} and \cite[Chapter VIII, Theorem 3.11]{Kato.1980}, one obtains $H_n(P) \to  H(P)$ in the strong resolvent sense. One easily checks that $ H_n(P) -  H_m(P)$ is positivity preserving for all $m\geq n$. As in the proof of \cite[Proposition 6.8]{Miyao.2010} one can show that the semigroup generated by $H_n(P)$ is for every $n\in \mathbb N$ positivity improving. By applying \cite[Theorem 2.4]{Miyao.2013}, we obtain that the semigroup generated by $ H(P)$ is positivity improving.
	 \end{proof}
	\end{appendices}

\end{document}